\newtheorem{theorem}{Theorem}
\newtheorem{definition}{Definition}
\newtheorem{corollary}{Corollary}
\newtheorem{lemma}{Lemma}
\newcommand{\cR}{\mbox{${\cal R}$}}
\newcommand{\blackslug}{\mbox{\hskip 1pt \vrule width 4pt height 8pt 
depth 1.5pt \hskip 1pt}}
\newcommand{\qed}{\quad\blackslug\lower 8.5pt\null\par\noindent}
\newenvironment{proof}{\par\noindent{\bf Proof:}}{\qed \par}
\begin{document}

\title{Ultra valuations}
\author{Daniel Lehmann \\
School of Engineering and Computer Science, \\
Hebrew University, \\
Jerusalem 91904, Israel \\
daniel.lehmann1@mail.huji.ac.il }
\maketitle
\begin{abstract}
This paper considers an original exchange property of set valuations.
This property is shown to be equivalent to a property described in~\cite{DT:AML} 
in the context of discrete optimization and matroids and shown there to characterize
the valuations for which the demand oracle can be implemented by a greedy algorithm.
The same exchange property is also equivalent to a property described independently 
 in~\cite{RGP:ET} and in~\cite{LLN:GEB} and shown there 
to be satisfied by substitutes valuations.
It is also equivalent to an ultra-metric property 
of the complementarity exhibited by a valuation.
The paper then studies the family of valuations that satisfy this exchange property, 
the ultra valuations.
Any substitutes valuation is an ultra valuation, but ultra valuations may exhibit 
complementarities.
Ultra valuations satisfy the law of aggregate demand introduced 
in~\cite{Hatfield_Milgrom:2005}.
Any symmetric valuation is an ultra valuation.
Substitutes valuations are exactly the submodular ultra valuations.
Ultra valuations define ultrametrics on the set of items.
The maximum of an ultra valuation on $n$ items can be found in $O(n^2)$ steps.
\end{abstract}

Keywords: discrete optimization, combinatorial auctions, set valuations, ultra valuations, 
gross substitutability, $M^{\natural}$ valuations, greedy optimization, well-layered maps

\section{Introduction} \label{sec:intro}
Two different streams of research: equilibrium theory in Economics 
and discrete optimization in Applied Mathematics have
considered closely related families of set functions but used different languages, 
notations and auxiliary assumptions common in their fields.
They sometimes failed to realize that the families they considered were essentially the same.

Kelso and Crawford~\cite{KelsoCraw:82} introduced the (gross) substitutes valuations
in order to model the work market. 
They showed that a market in which all agents have substitutes valuations 
has a Walrasian equilibrium.
Since then, the study of substitutes valuations and of discrete markets amongst  agents 
exhibiting substitutes valuations have been intensively pursued, 
see in particular Bikhchandani and Mamer~\cite{BikhMamer:equ},
Gul and Stacchetti~\cite{GulStacc:99} and Reijnierse, van Gellekom and 
Potters~\cite{RGP:ET}.
In those works one mostly considers properties of valuations under different item prices.

Applied mathematicians studied discrete convex optimization, matroids and greedy algorithms,
see in particular Jensen and Korte~\cite{JensenKorte:82}, Korte, Lov\'{a}sz and 
Schrader~\cite{Korte_Lovasz_Schrader:91}, Dress and Terhalle~\cite{DT:AML} 
and Murota and Shioura~\cite{MurotaShioura:99}.
In those works one mostly considers exchange properties that do not involve prices.

The connection between those two streams was noticed by 
Fujishige and Yang~\cite{FujiYang:03} and
Lehmann, Lehmann and Nisan~\cite{LLN:GEB} 
who considered more general families of valuations exhibiting no complementarity 
and extended some of their results to valuations with limited complementarities.
The algorithmic aspects of the topic have been remarkably surveyed in~\cite{RPL:GEB}
which also presents original results.

The purpose of this paper is to present an original exchange property 
and show that it is equivalent to a number of properties studied 
in~\cite{RGP:ET, DT:AML, LLN:GEB, Bing_Lehmann_Milgrom:Leibniz} 
and that it implies the law of aggregate demand of~\cite{Hatfield_Milgrom:2005}.
Valuations that satisfy this exchange property will be called ultra valuations, after one of 
those equivalent properties, related to ultra-metrics and presented in Section~\ref{sec:ultra}
below.

This exchange property greatly facilitates the proof of all properties of ultra valuations and
in particular the fact that a greedy algorithm finds an optimal bundle under any price vector.
This greedy algorithm has been described by Dress and Terhalle in~\cite{DT:AML} 
where the class of valuations for which it finds a maximum is characterized 
by the exchange property in Definition~\ref{def:DT} below. 
The authors did not relate their property with the substitutes property.
This paper shows that the class of valuations that satisfy Dress and Terhalle's property is
strictly larger than the class of substitutes valuations.
This greedy algorithm is not the one described in~\cite{RPL:GEB} and shown there to 
characterize substitutes valuations.
Our exchange property is also useful in showing that substitutes valuations 
are exactly the submodular ultra valuations and in looking for a notion of equilibrium
suitable for exchanges in which agents exhibit ultra valuations.
This allows for a characterization of the substitutable choice functions of Hatfield and 
Milgrom's~\cite{Hatfield_Milgrom:2005} in Appendix~\ref{app:CL}.
In Milgram and Hatfield's model the set of items has a structure: 
the items are contracts between a hospital and a doctor.
This structure is used in a number of later works, 
including~\cite{Hatfield_Kominers:AEJ, Hatfield_Kojima:comment, Hatfield_Kominers:hidden, 
Hatfield_Kominers:GEB, Hatfield+4:2018}, where, for example, 
items (contracts) which refer to the same doctor are given special consideration.
The many classes of valuations considered in those works use, in their definition, 
this extra structure. 
Therefore they cannot be compared with the classes of the present work which assumes that
the set of items is unstructured.

\section{Basic notions and notations } \label{sec:basic}
We consider a finite set of items. 
We follow~\cite{GulStacc:99} and name this set $\Omega$ and let \mbox{$n \: = \:$} 
\mbox{$\mid \Omega \mid$}.
A subset of $\Omega$ is a bundle. A valuation gives a real value to every bundle:
any real function \mbox{$v : {2}^{\Omega} \longrightarrow \cR$} is a valuation.
Note that we do not require $v$ to be monotonic, non-negative or that \mbox{$v(\emptyset)$}
be equal to $0$.

In the sequel $A$, $B$, $X$, $Y$, \ldots will always denote subsets of 
$\Omega$, 
and $a$, $b$, $x$, $y$, \dots will denote elements of $\Omega$. 
The number of elements of $A$ is denoted $\mid A \mid$.
The set \mbox{$A \cup \{ x \}$} will
be denoted \mbox{$A + x$} {\em only when} \mbox{$x \not \in A$} 
and the set \mbox{$A - \{ x \}$} will be denoted
\mbox{$A - x$} {\em only when} \mbox{$x \in A$}. 
The set \mbox{$A - x + y$} will denote \mbox{$( A - \{ x \}) \cup \{ y \}$} {\em only
when} \mbox{$x \neq y$}, \mbox{$x \in A$} and \mbox{$y \not \in A$}. 
The set \mbox{$\{ x \} \cup \{ y \}$} will be denoted
\mbox{$x+y$} {\em only when} \mbox{$x \neq y$}.
If \mbox{$X \subseteq \Omega$} we shall denote by $v^{X}$ the restriction
of $v$ to subsets of $X$ defined by: \mbox{$v^{X}(A) = v(A)$} for all
\mbox{$A \subseteq X$}. 
The marginal valuation defined by $v$ given a bundle 
\mbox{$A \subseteq \Omega$}, denoted $v_{A}$ is a valuation on $\Omega - A$ defined by
\[
v_{A}(B) = v(A \cup B) - v(A)
\]
for any bundle \mbox{$B \subseteq \Omega - A$}.
For convenience, the expression \mbox{$v( x \mid y )$} will be used to denote 
\mbox{$v_{\{ y \}} ( \{ x \} )$} and \mbox{$v_{A}(x \mid B)$} for
\mbox{${(v_{A})}_{B}(x) \: = \:$} \mbox{$v_{A \cup B}(x)$}. 
We shall {\em not} use \mbox{$v(x \mid B \mid A )$}.

We shall now define a measure of the degree of complementarity between two items.
It will be a central technical tool in the sequel.
\begin{definition} \label{def:c}
Let $v$ be a valuation on $\Omega$, \mbox{$A \subseteq \Omega$} 
and let $x$, \mbox{$y \in \Omega - A$} be distinct items.
The complementarity of $x$ and $y$ given $A$ is defined as:
\[
c_{A} ( x , y ) \: = \: v_{A} (x + y) - v_{A} ( x ) - v_{A} (y).
\]
\end{definition}
Note that \mbox{$c_{A} ( y , x ) \: = \: c_{A} ( x , y )$} and that if $x$ and $y$ are
complementary in the presence of $A$ the quantity \mbox{$c_{A}(x , y)$} is positive and that
it is negative if $x$ and $y$ are substitutes in the presence of $A$.

\begin{definition} \label{def:additive}
A valuation \mbox{$p : {2}^{\Omega} \rightarrow \cR$} is said to be {\em additive} iff
\mbox{$f(\emptyset) = 0$} and
for any \mbox{$A , B \subseteq \Omega$} one has
\mbox{$p(A) + p(B) \: = \:$} \mbox{$p(A \cup B) + p(A \cap B)$}.
\end{definition}
Often an additive valuation $p$ presents a price structure in which every bundle is valued
at the sum of the prices of the items it contains.

\begin{definition} \label{def:utility}
If $v$ is any valuation and $p$ is an {\em additive} valuation, the valuation $v^{p}$ defined by
\mbox{$v^{p} ( A ) \, = \,$} 
\mbox{$v(A) - p(A)$} is the {\em utility} defined by $v$ and $p$.
\end{definition}

\section{Plan of this paper} \label{sec:plan}
In Section~\ref{sec:u-val} an original exchange property, {\bf Exchange}, 
that defines ultra valuations is presented and some basic facts established.
In order to show immediately the relevance of this property to economics and auctions
Section~\ref{sec:aggregate} shows
that any ultra valuation satisfies the law of aggregate demand 
of~\cite{Hatfield_Milgrom:2005}.
Section~\ref{sec:ultra} presents the main technical result of this paper.
It describes the {\bf Ultra} property and shows that it implies {\bf Exchange}.
The {\bf Ultra} property is inspired by~\cite{Bing_Lehmann_Milgrom:Leibniz}.
Section~\ref{sec:LLN} shows that properties, {\bf LLN} and {\bf RGP}, respectively studied 
by~\cite{LLN:GEB} and~\cite{RGP:ET} are trivially equivalent and
imply {\bf Ultra}.
Section~\ref{sec:DT} shows that a property studied by~\cite{DT:AML} implies {\bf LLN}
and is implied by {\bf Exchange}.
One concludes that all properties mentioned above are equivalent.
Section~\ref{sec:examples} provides examples of ultra valuations
and Section~\ref{sec:closure} studies some closure properties 
of the family.
Section~\ref{sec:ultra_sub} characterizes substitutes valuations as those
ultra valuations that are submodular. 
The proof is an interesting alternative to~\cite{RGP:ET}'s proof.
Section~\ref{sec:preferred} studies, for ultra valuations, 
the structure of preferred bundles, i.e., those bundles that maximize the valuation. 
Some of the properties put in evidence seem novel even for substitutes valuations.
Section~\ref{sec:prices} studies the changes in preferred bundles caused by a change in 
the prices of items.
Section~\ref{sec:optimal} proposes some characterizations of preferred bundles.
Section~\ref{sec:greedy} provides an alternative proof to~\cite{DT:AML}'s result:
ultra valuations are exactly those for which a greedy algorithm finds a bundle of maximal value.
Section~\ref{sec:Walras} characterizes competitive (Walrasian) equilibria among agents
exhibiting ultra valuations and discusses the kind of stability that can be expected when
there is no competitive equilibrium.
Section~\ref{sec:conclusion} concludes this paper and discusses open problems 
and future work.
Appendices~\ref{app:lemmas}, \ref{app:dummy} and~\ref{app:substitutes}
contain technical lemmas used in the proofs.
Appendix~\ref{app:CL} studies the relation between ultra valuations and
the substitutability property of~\cite{Hatfield_Milgrom:2005} couched in choice-language.

\section{Ultra valuations} \label{sec:u-val}

Exchange properties are an important tool in discrete optimization 
and a number of such properties have been considered in the literature,
notably by Dress and Terhalle in~\cite{DT:AML} (see Definition~\ref{def:DT}) and by
Murota and Shioura in~\cite{MurotaShioura:99} (see Definition~\ref{def:M-natural}).
We shall now introduce an original exchange property.
This property can, without too much effort be shown to imply the former 
and be implied by the latter. 
We shall show, with substantial effort, that it is equivalent to the former and strictly weaker
than the latter.
Our exchange property has no obvious intuitive meaning: one cannot, on first principles, 
characterize the type of economic agents whose valuation satisfies it, but neither is this the case for
the properties just mentioned. 
Note that, in the definition below, to any \mbox{$x \in A - B$} must correspond 
a \mbox{$y \in B - A$}, as in the $M^{\natural}$-concavity of~\cite{MurotaShioura:99}.

\begin{definition} \label{def:u-val}
Let $\Omega$ be a finite set and \mbox{$v : {2}^{\Omega} \rightarrow \cR$} be 
a valuation on $\Omega$.
The valuation $v$ is an {\em ultra} valuation iff
for any \mbox{$A , B \subseteq \Omega$} such that \mbox{$\mid A \mid \leq \mid B \mid$}
and any \mbox{$x \in A - B$} there is some \mbox{$y \in B - A$} such that
\[ 
{\bf Exchange} \ \ v(A) + v(B) \: \leq \: v(A - x + y) + v(B - y + x).
\]
\end{definition}

The reason for the term {\em ultra} will become clear in Section~\ref{sec:ultra}.
A striking feature of {\bf Exchange} is the requirement that the size of the bundle $A$, 
from which $x$ will be taken contains is not greater than that of $B$.
It may seem odd that size should be a consideration.
In Section~\ref{sec:aggregate} the reader will meet a property considered 
in~\cite{Hatfield_Milgrom:2005}, with a solid economic interpretation, 
that also implies size considerations.

But, first, let us state some simple facts.
\begin{lemma} \label{the:marginal}
If $v$ is an ultra valuation then, 
\begin{enumerate}
\item for any \mbox{$S \subseteq \Omega$} the
marginal valuation $v_{S}$ is also an ultra valuation, and
\item for any additive valuation $p$, the valuations \mbox{$v + p$} 
and \mbox{$v - p$} are ultra valuations.
\end{enumerate}
\end{lemma}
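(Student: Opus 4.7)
The plan is to prove each of the two items by reducing the Exchange property for the derived valuation directly to the Exchange property for $v$, in both cases using the same witness $y$ produced by $v$.

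For item 1, I would fix $S \subseteq \Omega$ and consider arbitrary $A, B \subseteq \Omega - S$ with $\mid A \mid \leq \mid B \mid$ and $x \in A - B$. Writing $A' = S \cup A$ and $B' = S \cup B$, note that $\mid A' \mid \leq \mid B' \mid$, and since $A \cap S = \emptyset = B \cap S$, we have $x \in A' - B'$. Applying Exchange for $v$ to $A'$, $B'$, $x$ yields some $y \in B' - A' = B - A$ with $v(A') + v(B') \leq v(A' - x + y) + v(B' - y + x)$. Subtracting $2\,v(S)$ from both sides and using the definition $v_S(T) = v(S \cup T) - v(S)$ (together with $A' - x + y = S \cup (A - x + y)$ and similarly for $B'$) gives Exchange for $v_S$ with the same $y$.

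For item 2, fix an additive $p$ and take arbitrary $A$, $B$, $x$ as in Definition~\ref{def:u-val}. Use the Exchange property of $v$ to produce a witness $y \in B - A$. The key observation is that $(A - x + y) \cup (B - y + x) = A \cup B$ and $(A - x + y) \cap (B - y + x) = A \cap B$: both new sets contain $x$ and $y$ between them, and agree with $A$ and $B$ on every other element. Since $p$ is additive, it follows that
\[
p(A - x + y) + p(B - y + x) \: = \: p(A \cup B) + p(A \cap B) \: = \: p(A) + p(B).
\]
Therefore the $p$-contributions on the two sides of Exchange cancel, and Exchange for $v \pm p$ with this same $y$ reduces to Exchange for $v$.

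I do not expect a genuine obstacle here; the only points that need a small amount of care are checking that $y$ lies in the right difference set (so that the notation $A - x + y$ and $B - y + x$ is legal) and that the size hypothesis on $A$, $B$ is preserved under the translation $T \mapsto S \cup T$ in item 1. Both are immediate from disjointness of $S$ with $A$ and $B$ and from $y \in B - A$.
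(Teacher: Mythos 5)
Your proposal is correct and follows essentially the same route as the paper: item 1 is handled by the translation $A \mapsto S \cup A$, $B \mapsto S \cup B$, under which the size hypothesis, the difference sets, and the Exchange inequality (after cancelling $2\,v(S)$) all correspond exactly; item 2 is handled by observing that the $p$-terms on the two sides of Exchange coincide. Your derivation of that last cancellation from $(A-x+y)\cup(B-y+x)=A\cup B$ and $(A-x+y)\cap(B-y+x)=A\cap B$ via the modularity axiom is a slightly more literal use of Definition~\ref{def:additive} than the paper's item-price sum, but it is the same argument in substance.
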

\begin{proof}
For the first claim, we have \mbox{$\mid A \mid \: \leq \:$} \mbox{$ \mid B \mid$} iff 
\mbox{$\mid A \cup S \mid \: \leq \: $} \mbox{$\mid B \cup S \mid$},
\mbox{$ A - B \: = \: $} \mbox{$ (A \cup S) - (B \cup S)$},
and
\[
v_{S} ( A ) + v_{S} ( B ) \: \geq \: v_{S} ( A - x + y ) + v_{S} (B - y + x ) {\rm \ \  iff}
\]
\[
 v ( A \cup S ) + v ( B \cup S) \: \geq \: v ( A \cup S - x + y ) + v (B \cup S - y + x ).
\]
Secondly, 
\[
\sum_{z \in A} p_{z} + \sum_{z \in B} p_{z} \: = \:
\sum_{z \in A - x + y} p_{z} + \sum_{z \in B - y + z} p_{z}.
\]
\end{proof}

\section{The law of aggregate demand} \label{sec:aggregate}
In~\cite{Hatfield_Milgrom:2005} Hatfield and Milgrom developed a general model
that encompasses many situations previously studied in the literature.
They identified a common property of those many situations: the law of
aggregate demand that, informally, says that, if the set of possible choices is enlarged,
the number of items desired cannot decrease.
They formulate their condition under the assumption that, for any \mbox{$X \subseteq \Omega$}
there is a unique bundle maximizing $v$ among all subsets of $X$.
Definition~\ref{def:aggregate} reduces to their definition in such a case, but treats properly
the general case with ties.
\begin{definition} \label{def:aggregate}
A valuation $v$ satisfies the {\em law of aggregate demand} iff for any partition of the set
of items \mbox{$\Omega \: = \: X \cup Y$}, \mbox{$X \cap Y \: = \: \emptyset$}
and for any bundle \mbox{$A \subseteq X$} that maximizes $v$ among all subsets
of $X$ there is some bundle \mbox{$B \subseteq \Omega$} that maximizes $v$ over
all subsets of $\Omega$ such that \mbox{$\mid B \mid \: \geq \:$} 
\mbox{$\mid A \mid$}. 
\end{definition}

\begin{theorem} \label{the:aggregate}
Any ultra valuation satisfies the law of aggregate demand.
\end{theorem}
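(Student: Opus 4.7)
The plan is to reduce the theorem, by induction on $|Y|$, to a one-step lemma: \emph{if $A$ maximizes $v$ over subsets of $X$ and $y \notin X$, then some maximizer of $v$ over subsets of $X + y$ has size at least $|A|$.} Iteratively promoting elements of $Y$ into $X$ then propagates the size bound until one reaches $\Omega$, yielding the desired global maximizer.

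To prove the one-step lemma, I would take any maximizer $A'$ of $v$ on $X + y$ and assume $|A'| < |A|$, aiming to produce a replacement of size at least $|A|$. If $y \notin A'$, then $A' \subseteq X$ and the two optimality hypotheses immediately force $v(A) = v(A')$, so $A$ itself serves as a maximizer on $X + y$. The interesting case is $y \in A'$. Since $A \subseteq X$, we have $y \in A' - A$, so I may invoke {\bf Exchange} on the pair $(A', A)$, permissible because $|A'| \leq |A|$, with the deliberate choice $x = y$. This produces some $z \in A - A'$ such that
\[
v(A') + v(A) \: \leq \: v(A' - y + z) + v(A - z + y).
\]
The crucial observation is that both summands on the right are pinned from above by the two optimality hypotheses: $A' - y + z \subseteq X$ (because $z \in A \subseteq X$) gives $v(A' - y + z) \leq v(A)$, while $A - z + y \subseteq X + y$ gives $v(A - z + y) \leq v(A')$. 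These bounds collapse the displayed inequality to an equality, so in particular $v(A - z + y) = v(A')$, and $A - z + y$ is a maximizer on $X + y$ of size exactly $|A|$.

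The main obstacle is recognizing how to marshal {\bf Exchange} to the situation: one must deliberately pick $x = y$ from $A' - A$, so that the two perturbed sets $A' - y + z$ and $A - z + y$ on the right-hand side of {\bf Exchange} land respectively inside $X$ and inside $X + y$, making the two optimality assumptions simultaneously available to sandwich the inequality. The asymmetric size condition $|A'| \leq |A|$ in {\bf Exchange} is exactly what allows this move under the hypothesis $|A'| < |A|$. Once this alignment is spotted, the induction on $|Y|$ reducing the general partition to iterated one-element additions is mere bookkeeping.
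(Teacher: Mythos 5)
Your argument is correct and is essentially the paper's own proof: the paper likewise reduces to adding one element $z$ at a time, applies \textbf{Exchange} to the pair consisting of a global maximizer $Q$ (your $A'$) and $A$ with the exchanged element being the newly added item, and uses the two optimality hypotheses to force equality, yielding the maximizer $A - y + z$ of size $|A|$. The only cosmetic difference is that you isolate the case $y \notin A'$ explicitly where the paper folds it into a ``without loss of generality'' step.
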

The converse does not hold.
\begin{proof}
Let $v$ be an ultra valuation and let \mbox{$\Omega \: = \:$} \mbox{$X \cup \{ z \}$}
for \mbox{$z \in \Omega - X$}.
Assume \mbox{$A \subseteq X$} maximizes $v$ among all subsets of $X$.
Let \mbox{$Q \subseteq \Omega$} be any bundle that maximizes $v$ among all subsets 
of $\Omega$.
We want to prove that there is some \mbox{$B \subseteq \Omega$} such that 
\mbox{$v(B) \: = \: v(Q)$} and \mbox{$\mid B \mid \: \geq \:$} \mbox{$\mid A \mid$}.
Without loss of generality we may assume that \mbox{$\mid Q \mid \: < \:$}
\mbox{$\mid A \mid$} and also that \mbox{$v(Q) \: > \:$} \mbox{$v(A)$}, 
implying \mbox{$z \in Q - A$}.
{\bf Exchange} then implies that there is some \mbox{$y \in A - Q$}
such that
\[
v(Q) + v(A) \: \leq \: v(Q - z + y) + v(A - y + z).
\]
But \mbox{$v(Q) \: \geq \:$} \mbox{$ v(A - y + z)$} since $Q$ maximizes $v$ over all subsets and
\mbox{$v(A) \: \geq \:$} \mbox{$v(Q - z + y)$} since \mbox{$Q - z + y \subseteq X$}.
We conclude that \mbox{$v(Q) \: = \:$} \mbox{$ v(A - y + z)$} and we can take
\mbox{$B \: = \:$} \mbox{$A - y + z$}.
The proof is concluded by adding to $X$ the elements of $Y$ one by one.
\end{proof}

\section{The Ultra property} \label{sec:ultra}
The following originates in~\cite{Bing_Lehmann_Milgrom:Leibniz}.
\begin{definition} \label{def:ultra}
Let $v$ be any valuation on $\Omega$.
If for any \mbox{$A \subseteq \Omega$}
and any pairwise distinct \mbox{$x , y , z \in \Omega - A $} 
such that \mbox{$c_{A} ( x , y ) \: > \:$}
\mbox{$c_{A} ( x , z )$} one has \mbox{$c_{A} ( y , z )\:  = \:$}
\mbox{$c_{A} ( x , y )$},
we shall say that $v$ satisfies the {\bf Ultra} property.
\end{definition}
The property {\bf Ultra} is the characteristic property of ultra-metrics: 
all triangles are isoceles and the equal sides are the longer ones.
This makes $c_{A}$ almost an ultra-metric, but notice that 
\mbox{$c_{A} ( x , y )$} may be negative. 

Lemma~\ref{the:central} is the hard core of this paper.
\begin{lemma} \label{the:central}
If $v$ satisfies the {\bf Ultra} property then $v$ is an ultra valuation, 
i.e., it satisfies the {\bf Exchange} property.
\end{lemma}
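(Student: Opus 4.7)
The plan is to reduce to a normalized disjoint configuration using the closure of the Ultra property under marginals, then induct on the size of the symmetric difference, invoking the ultrametric triangle structure at each step.

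First I would establish that Ultra is closed under marginals: since $c^{v_S}_A(u,w) = c^v_{S \cup A}(u,w)$ for any disjoint $S$ and $A$, the ultrametric triangle condition at $v_S$ with bundle $A$ coincides with the one at $v$ with bundle $S \cup A$. Replacing $v$ by $v_{A \cap B}$ therefore converts the Exchange inequality into the same inequality for the pair $(A - B, B - A)$, which are disjoint, so I may assume $A \cap B = \emptyset$ throughout. Next I would dispatch the small cases. When $|A - B| + |B - A| \leq 2$, the condition is either vacuous or the unique available swap yields equality. The first substantial case, with $A = \{x\}$ and $B = \{y_1, y_2\}$, reduces algebraically to the requirement $c_\emptyset(y_1, y_2) \leq c_\emptyset(x, y_j)$ for the chosen $j$, and this falls out immediately from Ultra applied to $\{x, y_1, y_2\}$: if $c_\emptyset(y_1, y_2)$ is the smallest of the three, either $j$ works; otherwise the isoceles structure forces $c_\emptyset(y_1, y_2) = c_\emptyset(x, y_j)$ for the right $j$.

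For the general inductive step with disjoint $A, B$ satisfying $|A| \leq |B|$ and $x \in A$, I would unwind the target inequality as $\alpha_{B - y}(y) \leq \alpha_{B - y}(x)$, where $\alpha_S(u) = v_S(u) - v_\emptyset(u)$ admits the telescoping decomposition $\alpha_S(u) = \sum_i c_{S_{i-1}}(s_i, u)$ along any enumeration $s_1, \ldots, s_{|S|}$ of $S$. Choosing $y$ extremally with respect to $c_{A - x}$, viewed as an ultrametric-like function on $\Omega - (A - x)$, one applies Ultra triangle-wise at each context $S_{i-1}$ to control the term-by-term difference $c_{S_{i-1}}(s_i, y) - c_{S_{i-1}}(s_i, x)$, summing to the desired global inequality. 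The case $|A| > 1$ reduces to the $|A| = 1$ situation via an additional outer induction on $|A|$, again exploiting Ultra preservation under marginals.

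The main obstacle is bridging the local, triangle-only information supplied by Ultra with the global Exchange inequality, which compares valuations on bundles of arbitrary size. The chosen $y$ must be good with respect to two shifting contexts, $A - x$ on the left side of the swap and $B - y$ on the right, and these can differ substantially once $|A|$ and $|B|$ grow. I expect the hardest step to be showing that an extremal $y$ relative to $c_{A - x}$ remains a suitable swap partner when the context shifts to $c_{B - y}$; this will likely require either an auxiliary lemma relating the ultrametric trees at these two contexts, or a more refined notion of extremality guided by both trees simultaneously.
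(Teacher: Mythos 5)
Your reduction to the disjoint case via marginals is exactly right and matches the paper (Lemma~\ref{le:conditional}), and your treatment of the base case $|A|=1$, $|B|=2$ is essentially the paper's. But the heart of the proof --- passing from the purely local, three-point {\bf Ultra} condition to the global {\bf Exchange} inequality --- is left open in your proposal, and the specific mechanism you sketch would not close it. The telescoping identity $\alpha_S(u)=\sum_i c_{S_{i-1}}(s_i,u)$ is correct, but term-by-term domination $c_{S_{i-1}}(s_i,y)\le c_{S_{i-1}}(s_i,x)$ requires $y$ to be extremal for $c_{S_{i-1}}(x,\cdot)$ at \emph{every} intermediate context $S_{i-1}\subseteq B-y$, and {\bf Ultra} provides no relation between $c_{A}$ and $c_{A'}$ when $A\subset A'$: a $y$ minimizing $c_{\emptyset}(x,\cdot)$ need not minimize $c_{\{s_1\}}(x,\cdot)$. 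You flag this yourself as ``the hardest step,'' but it is not a technicality to be patched later --- it is the entire content of the lemma. Likewise, the outer induction on $|A|$ is not simply a matter of ``exploiting Ultra preservation under marginals'': conditioning on $w\in A-x$ replaces $v(B)$ by $v(B\cup\{w\})$ on one side of the inequality, and undoing that substitution is precisely where the work lies.

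The paper's proof shows what is actually needed. For the base case (Lemma~\ref{le:1-n}) it inducts on the size of the larger set, choosing the element $w$ to \emph{remove from it} so as to maximize the valuation difference $v(B-z)-v(B-z+x)$, and uses a three-point consequence of {\bf Ultra} (Lemma~\ref{le:cv}) to splice the inductive inequality with the missing step back up to $B$. For the induction on $|A|$ it chooses $z\in B$ to maximize $v(B-z'+w)-v(B-z'+x)$ and invokes a genuinely four-point inequality (Lemma~\ref{le:2-2}) whose proof is itself a careful case analysis of the isoceles structure. In both places the extremal choice is made on a difference of valuations rather than on a single complementarity, and no term-by-term control over shifting contexts is attempted or needed. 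Your proposal has the right skeleton, but the key lemmas and the correct extremal choices are absent; as written it does not constitute a proof.
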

In Theorem~\ref{the:equivalence} we shall conclude that the converse holds: 
{\bf Exchange} implies {\bf Ultra} and this is much easier to prove.
\begin{proof}
We shall rely on a number of lemmas the proof of which appears in
Appendix~\ref{app:lemmas}.
Assume $v$ satisfies {\bf Ultra}.
Let \mbox{$A , B \subseteq \Omega$}, \mbox{$\mid A \mid \: \leq \:$}
\mbox{$ \mid B \mid$} and \mbox{$x \in A - B$}.
Let \mbox{$X = A \cap B$}.
Note that $X$ may well be empty, in this case \mbox{$v_{X} \: = \: v$}.
Our goal is equivalent to showing that there is some \mbox{$y \in B - A$} such that
\[
v_{X} ( A - B ) + v_{X} ( B - A ) \: \leq \: v_{X}(A - B - x + y) + v_{X}(B - A - y + x ).
\]
But, now, \mbox{$(A - B) \cap (B - A) \: = \:$} $\emptyset$ and Lemma~\ref{le:conditional} 
shows that $v_{X}$ satisfies {\bf Ultra}.
It is therefore enough for us to show that {\bf Exchange} holds when 
\mbox{$A \cap B \: = \:$} $\emptyset$. 
We assume \mbox{$A \cap B \: = \:$} $\emptyset$.

We shall prove our claim by induction on the size of \mbox{$A$}.
Since \mbox{$x \in A$} this set is not empty and our base case is
\mbox{$\mid A \mid \: = \:$} $1$, i.e., \mbox{$A \: = \:$} \mbox{$\{ x \}$}.
Since \mbox{$\mid A \mid \: \leq \: $} \mbox{$\mid B \mid$}, \mbox{$B \neq \emptyset$}.
We want to show that there is some \mbox{$y \in B$} such that
\[
v(x) + v(B) \: \leq \: v(y) + v(B - y + x).
\]
This is proved in Lemma~\ref{le:1-n}, in Appendix~\ref{app:lemmas}.

Suppose now that {\bf Exchange} (for \mbox{$A \cap B \: = \:$} $\emptyset$) 
holds if \mbox{$\mid A \mid \: \leq \:$} \mbox{$ n - 1$}
(\mbox{$n \: \geq \:$} $2$) and assume \mbox{$\mid A \mid \: = \:$} $n$.
Let $w$ be any element of \mbox{$A - x$} and
let $z$ be the element of \mbox{$B$} that maximizes the quantity
\mbox{$v(B - z' + w) - v(B - z' + x)$} over all \mbox{$z' \in B$}.
By the induction hypothesis there is some \mbox{$y \in B - z$} such that
\begin{equation} \label{eq:one}
v_{w}(A - w) + v_{w}(B - z) \: \leq \: v_{w}(A - w - x + y) + v_{w}(B - z - y + x).
\end{equation}
We have
\begin{equation}
v(A) + v(B - z + w) \: \leq \: v(A - x + y) + v(B - z - y + w + x).
\end{equation}
Let \mbox{$X = B - y - z$}
By Lemma~\ref{le:2-2} and Lemma~\ref{le:conditional}, in Appendix~\ref{app:lemmas},
\[
v_{X} ( y + z ) + v_{X} ( w + x) \: \leq \: \max(v_{X} (x + y) + v_{X}(w + z) , 
v_{X}(w + y) + v_{X}(x + z) ).
\]
and therefore
\[
v(B) + v(B - y - z + w + x) \: \leq \: 
\]
\[
\max(v(B - z + x) + v(B - y + w) , 
v(B - z + w) + v(B - y + x) ).
\]
By the choice of $z$ we have
\[
v(B - z + w) - v(B - z + x ) \: \geq \: v(B - y + w) - v(B - y + x) 
\]
and therefore
\begin{equation} \label{eq:two}
v(B) + v(B - y - z + w + x) \: \leq \: v(B - y + x) + v(B - z + w) .
\end{equation}
Now, by Equations~(\ref{eq:one}) and~(\ref{eq:two}):
\[
v(A) + v(B) \: = \: v(A) + v(B - z + w) + v(B) - v(B - z + w) \: \leq \:
\]
\[
v(A - x + y) + v(B - y - z + w + x) + v(B - y + x) - v(B - y - z + w + x) \: = \:
\]
\[
v(A - x + y) + v(B - y + x).
\]
\end{proof}

\section{Two more characteristic properties} \label{sec:LLN}
We shall now introduce two more properties that will be shown to be equivalent to 
{\bf Exchange} and {\bf Ultra} in Theorem~\ref{the:equivalence}.
Both are named by the initials of the authors of the articles in which they were presented.
The following is obvious from the definition of marginal valuations.
\begin{lemma} \label{the:LLN-RGP}
The two following properties of a valuation $v$ on $\Omega$ are equivalent:
\begin{enumerate}
\item {\bf LLN}
for any \mbox{$S \subseteq \Omega$}
and for any pairwise distinct \mbox{$x , y , z \in \Omega - S $} 
\[
v_{S}(z \mid x) \: > \: v_{S}(z \mid y) \ \Rightarrow \ v_{S}(x \mid y) \: \geq \: 
v_{S}(x \mid z),
\]
\item {\bf RGP}
for any \mbox{$S \subseteq \Omega$}
and for any pairwise distinct \mbox{$x , y , z \in \Omega - S $} 
\[
v_{S}(x + y) + v_{S}(z) \: \leq \: \max( v_{S}(x + z) + v_{S}(y) , v_{S}(y + z) + v_{S}(x) ).
\]
\end{enumerate}
\end{lemma}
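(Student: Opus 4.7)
The plan is to convert both properties into equivalent statements about marginal values and observe that they coincide up to a single relabeling of variables. The only identity needed is $v_S(x+y) - v_S(y) = v_{S \cup \{y\}}(x) = v_S(x \mid y)$, which follows directly from the definitions given in Section~\ref{sec:basic}.

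First I would rewrite \textbf{RGP} as an implication. The inequality $v_S(x+y) + v_S(z) \leq \max(v_S(x+z) + v_S(y),\ v_S(y+z) + v_S(x))$ says precisely that whenever $v_S(x+y) + v_S(z) > v_S(x+z) + v_S(y)$, one must have $v_S(x+y) + v_S(z) \leq v_S(y+z) + v_S(x)$. Rearranging the hypothesis as $v_S(x+y) - v_S(y) > v_S(x+z) - v_S(z)$ converts it to $v_S(x \mid y) > v_S(x \mid z)$, while rearranging the conclusion as $v_S(x+y) - v_S(x) \leq v_S(y+z) - v_S(z)$ converts it to $v_S(y \mid x) \leq v_S(y \mid z)$.

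Therefore \textbf{RGP} is equivalent to: for every $S$ and every ordered triple of pairwise distinct $x, y, z \in \Omega - S$, $v_S(x \mid y) > v_S(x \mid z) \ \Rightarrow \ v_S(y \mid x) \leq v_S(y \mid z)$. Applying the permutation $(x,y,z) \mapsto (z,x,y)$, which is legitimate because the quantifier ranges over all ordered triples of distinct elements of $\Omega - S$, this becomes $v_S(z \mid x) > v_S(z \mid y) \ \Rightarrow \ v_S(x \mid z) \leq v_S(x \mid y)$, which is literally \textbf{LLN}. The inverse relabeling supplies the other direction of the equivalence simultaneously, since each step above was an algebraic if-and-only-if.

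There is no substantive obstacle here; the only point requiring attention is quantification. One must read both properties as universally quantified over all ordered triples of distinct elements in $\Omega - S$, since the hypothesis of \textbf{RGP}'s implication picks out one of the two terms inside the $\max$, and the relabeling step depends on this freedom of ordering. Once that is granted, the equivalence is purely mechanical, which is exactly why the author can call it ``obvious from the definition of marginal valuations.''
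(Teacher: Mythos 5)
Your proof is correct and is precisely the verification the paper declares ``obvious from the definition of marginal valuations'': rewriting the $\max$ as an implication, translating differences into marginal values via $v_S(x\mid y)=v_S(x+y)-v_S(y)$, and relabeling the universally quantified ordered triple. Your explicit attention to the quantification over ordered triples is a sound (and welcome) elaboration of what the paper leaves implicit.
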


Property {\bf LLN} was proved to be satisfied by substitutes valuations in~\cite{LLN:GEB}
(Claim 1 and Lemma 4) and used to show that substitutes valuations have measure zero
in the set of all valuations.
Property {\bf RGP} was described in~\cite{RGP:ET} (Theorem 10). 
The authors show that substitutes
valuations are exactly those submodular valuations that satisfy {\bf RGP}.

\begin{lemma} \label{the:LLN}
Any valuation $v$ that satisfies {\bf LLN} satisfies {\bf Ultra}.
\end{lemma}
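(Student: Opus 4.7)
The plan is to work entirely at the level of conditional marginals and translate through the bridge identity
\[
c_A(x,y) \: = \: v_A(x \mid y) - v_A(x) \: = \: v_A(y \mid x) - v_A(y),
\]
which is immediate from Definition~\ref{def:c}. Under this bridge, $c_A(x,y) > c_A(x,z)$ is equivalent to $v_A(x \mid y) > v_A(x \mid z)$, and more generally any comparison between two complementarities sharing a common first index reduces to a comparison between two conditional marginals of that index. So the target equality $c_A(y,z) = c_A(x,y)$ amounts to establishing two inequalities between conditional marginals of $y$.

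First I would apply {\bf LLN} directly to the hypothesis $v_A(x \mid y) > v_A(x \mid z)$, taking $S = A$ and identifying {\bf LLN}'s triple $(x, y, z)$ with our $(y, z, x)$ so that its hypothesis $v_S(z \mid x) > v_S(z \mid y)$ matches. The {\bf LLN} conclusion $v_S(x \mid y) \geq v_S(x \mid z)$ then reads, in our variables, $v_A(y \mid z) \geq v_A(y \mid x)$, which via the bridge is $c_A(y,z) \geq c_A(x,y)$.

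For the reverse inequality I would argue by contradiction. Suppose $c_A(y,z) > c_A(x,y)$, equivalently $v_A(y \mid z) > v_A(y \mid x)$. A second invocation of {\bf LLN}, this time identifying its triple $(x, y, z)$ with our $(z, x, y)$, yields $v_A(z \mid x) \geq v_A(z \mid y)$, i.e., $c_A(x,z) \geq c_A(y,z)$. Chaining with the supposed strict inequality gives $c_A(x,z) > c_A(x,y)$, contradicting the original hypothesis. Hence $c_A(y,z) = c_A(x,y)$.

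The only real obstacle is bookkeeping: each invocation of {\bf LLN} requires a triple of pairwise distinct elements in $\Omega - A$, which the {\bf Ultra} premise already supplies, and the two renamings must be kept straight. No induction, case analysis, or appeal to {\bf Exchange} or submodularity is needed; the lemma falls out from applying {\bf LLN} twice to the same triple under two different permutations of its three elements.
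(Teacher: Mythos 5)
Your proof is correct and follows essentially the same route as the paper: one direct application of {\bf LLN} (via the identity $c_{A}(x,y) = v_{A}(x \mid y) - v_{A}(x)$) gives $c_{A}(y,z) \geq c_{A}(x,y)$, and a second application to a permuted triple under the contradiction hypothesis rules out strict inequality. Your version is, if anything, slightly more explicit than the paper's about which permutation of the triple is used in the second invocation.
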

\begin{proof}
Assume \mbox{$c_{A} ( x , y ) \: > \:$} \mbox{$c_{A} (x , z )$}. 
We have
\mbox{$v_{A} ( x + y ) - v_{A} ( x ) - v_{A} ( y ) \: > \:$}
\mbox{$v_{A} (x + z ) - v_{A} ( x ) - v_{A} ( z )$} and
\mbox{$v_{A} ( x \mid y) \: > \:$} \mbox{$v_{A} ( x \mid z )$}.
{\bf LLN}, then implies \mbox{$v_{A} ( y \mid z ) \: \geq \:$}
\mbox{$v_{A} ( y \mid x )$}, and therefore
\mbox{$v_{A} ( y + z ) - v_{A} ( z ) \: \geq \:$}
\mbox{$v_{A} ( x + y ) - v_{A} ( x)$},
i.e., \mbox{$c_{A} ( y , z ) \: \geq \:$} \mbox{$c_{A} ( x , y )$}.

But if we had \mbox{$c_{A} ( y , z ) \: > \:$} \mbox{$c_{A} ( x , y )$},
by the reasoning just above, exchanging $x$ and $y$, we would conclude that 
\mbox{$c_{A} ( x , z ) \: \geq \:$} \mbox{$c_{A} ( x , y )$}, contradicting our assumption.
Therefore \mbox{$c_{A} ( y , z ) \: = \:$} \mbox{$c_{A} ( x , y )$}.
\end{proof}

\section{A third characteristic property} \label{sec:DT}
We shall now discuss an additional property, also named after its inventors.
Theorem~\ref{the:equivalence} will show that it is also equivalent to {\bf Exchange}.
\begin{definition} \label{def:DT}
Let $v$ be any valuation on $\Omega$.
If, for any \mbox{$S \subseteq \Omega$}, any \mbox{$T \subseteq \Omega - S$},
\mbox{$\mid T \mid \: \geq \: 3$} and
any \mbox{$x \in T$} there is some \mbox{$y \in T$}, \mbox{$y \neq x$} such that
\[
{\bf DT} \ \ v(S + x) + v(S + T - x) \: \leq \: v(S + y) + v(S + T - y)
\]
we shall say that $v$ satisfies the property {\bf DT}.
\end{definition}
Property {\bf DT} was shown in~\cite{DT:AML} (Theorem 3) to be a characteristic property 
of those valuations for which a bundle of maximal value can be found by a greedy algorithm.
The present work shows that these valuations are exactly the ultra valuations.

\begin{lemma} \label{the:DT}
The property {\bf DT} implies {\bf LLN} and is implied by {\bf Exchange}.
\end{lemma}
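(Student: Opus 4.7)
The plan is to prove each of the two implications independently; both are quick consequences of plugging a well-chosen special case into the stronger hypothesis.

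For \textbf{DT} $\Rightarrow$ \textbf{LLN}, I would work through the equivalent property \textbf{RGP} supplied by Lemma~\ref{the:LLN-RGP}. Given $S \subseteq \Omega$ and pairwise distinct $x, y, z \in \Omega - S$, I apply \textbf{DT} to this $S$ and to the three-element set $T = \{x, y, z\}$, choosing $z$ as the distinguished element of $T$. The hypothesis produces some $w \in T$, $w \neq z$, hence $w \in \{x, y\}$, such that
\[
v(S + z) + v(S + T - z) \leq v(S + w) + v(S + T - w).
\]
The left-hand side is $v(S + z) + v(S + x + y)$, and as $w$ ranges over $\{x, y\}$ the right-hand side takes one of the two values $v(S + x) + v(S + y + z)$ or $v(S + y) + v(S + x + z)$. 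The inequality therefore bounds the left-hand side by the maximum of these two, which after subtracting $2 v(S)$ on both sides is precisely \textbf{RGP}.

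For \textbf{Exchange} $\Rightarrow$ \textbf{DT}, fix $S$, $T \subseteq \Omega - S$ with $\mid T \mid \geq 3$, and $x \in T$. I set $A = S + x$ and $B = S \cup (T - x)$. Since $S$ and $T$ are disjoint, $\mid A \mid = \mid S \mid + 1$ and $\mid B \mid = \mid S \mid + \mid T \mid - 1$, so $\mid T \mid \geq 3$ gives $\mid A \mid \leq \mid B \mid$. Moreover $A - B = \{x\}$ and $B - A = T - x$. Applying \textbf{Exchange} to $A$, $B$ and the element $x \in A - B$ yields some $y \in B - A = T - x$ with
\[
v(A) + v(B) \leq v(A - x + y) + v(B - y + x).
\]
Using $y \notin S$ and $y \in T - x$, a routine simplification gives $A - x + y = S + y$ and $B - y + x = S + T - y$, so the displayed inequality is exactly the conclusion of \textbf{DT} for this $y \in T - x$.

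Neither direction presents a real obstacle; the only things needing care are the set-arithmetic identities (particularly that $A \cap B = S$, that $A - B$ really is the singleton $\{x\}$, and that $B - y + x$ simplifies to $S + T - y$) and the size comparison $\mid A \mid \leq \mid B \mid$ that is licensed by the hypothesis $\mid T \mid \geq 3$ and which is what forces \textbf{DT} to require $\mid T \mid \geq 3$ in the first place.
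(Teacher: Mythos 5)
Your proof is correct and follows essentially the same route as the paper: the first implication is the observation that \textbf{DT} restricted to $\mid T \mid = 3$ is exactly \textbf{RGP} (hence \textbf{LLN} via Lemma~\ref{the:LLN-RGP}), and the second is the specialization of \textbf{Exchange} to $A = S + x$, $B = S + T - x$, with $\mid T \mid \geq 3$ guaranteeing $\mid A \mid \leq \mid B \mid$. You merely spell out the set-arithmetic that the paper leaves implicit.
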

\begin{proof}
Property {\bf RGP} is exactly property {\bf DT} when \mbox{$\mid T \mid = 3$}.
Therefore we may conclude by Lemma~\ref{the:LLN-RGP} that {\bf DT} implies {\bf LLN}.
Property {\bf DT} follows from {\bf Exchange} by taking \mbox{$A = S + x$} and
\mbox{$B = S + T - x$} and noticing that \mbox{$\mid S + x \mid \: \leq \:$}
\mbox{$\mid S + T - x \mid$} since $T$ contains at least three items.
In fact \mbox{$\mid S + x \mid \: < \:$}
\mbox{$\mid S + T - x \mid$}.
\end{proof}

We conclude that all the properties considered so far are equivalent.
\begin{theorem} \label{the:equivalence}
Properties {\bf Exchange, Ultra, LLN, RGP} and {\bf DT} are equivalent.
\end{theorem}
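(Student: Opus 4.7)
The plan is simply to stitch together the lemmas already proved in Sections~\ref{sec:ultra}, \ref{sec:LLN} and~\ref{sec:DT} into a single implication cycle, with {\bf RGP} hanging off as an equivalent sibling of {\bf LLN}. No new technical content is required; all the substantive work is concentrated in Lemma~\ref{the:central}, and the present theorem is the logical accounting that packages the individual implications.

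Concretely, I would argue as follows. The equivalence of {\bf LLN} and {\bf RGP} is exactly the content of Lemma~\ref{the:LLN-RGP}, so it suffices to establish the cyclic chain
\[
{\bf LLN} \ \Longrightarrow \ {\bf Ultra} \ \Longrightarrow \ {\bf Exchange} \ \Longrightarrow \ {\bf DT} \ \Longrightarrow \ {\bf LLN}.
\]
The first arrow is Lemma~\ref{the:LLN}, the second is Lemma~\ref{the:central}, and the last two arrows are the two halves of Lemma~\ref{the:DT}. Since each of the four properties in the cycle implies the next, they are all mutually equivalent, and appending the {\bf LLN}$\Leftrightarrow${\bf RGP} equivalence completes the proof.

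There is no real obstacle here, because the one delicate implication, namely {\bf Ultra}$\Rightarrow${\bf Exchange}, has already been carried out in Lemma~\ref{the:central} by reducing to the disjoint case via $v_{X}$, inducting on $\mid A\mid$, and combining the induction hypothesis with the ultrametric triangle inequality supplied by Lemma~\ref{le:2-2}. The other implications are either syntactic rewritings between the $c_A$-notation and the marginal notation (for {\bf LLN}$\Leftrightarrow${\bf RGP} and {\bf LLN}$\Rightarrow${\bf Ultra}) or direct instantiations of {\bf Exchange} at $A = S + x$, $B = S + T - x$ (for {\bf Exchange}$\Rightarrow${\bf DT}), and the implication {\bf DT}$\Rightarrow${\bf LLN} follows by specializing {\bf DT} to $\mid T\mid = 3$. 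Thus the theorem reduces to citing Lemmas~\ref{the:LLN-RGP}, \ref{the:LLN}, \ref{the:central} and~\ref{the:DT} in the right order.
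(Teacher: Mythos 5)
Your proof is correct and is essentially identical to the paper's: the paper likewise proves Theorem~\ref{the:equivalence} by citing Lemmas~\ref{the:central}, \ref{the:LLN-RGP}, \ref{the:LLN} and~\ref{the:DT}, and your explicit cycle ${\bf LLN}\Rightarrow{\bf Ultra}\Rightarrow{\bf Exchange}\Rightarrow{\bf DT}\Rightarrow{\bf LLN}$ together with ${\bf LLN}\Leftrightarrow{\bf RGP}$ is exactly the intended logical accounting.
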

\begin{proof}
By Lemmas~\ref{the:central}, \ref{the:LLN-RGP}, \ref{the:LLN} and \ref{the:DT}.
\end{proof}

\section{Examples of ultra valuations} \label{sec:examples}
Theorem 10 of~\cite{RGP:ET} shows that substitutes valuations satisfy {\bf RGP} and, 
independently, Claim 1 and Lemma 4 of~\cite{LLN:GEB} show that substitutes valuations
satisfy {\bf LLN}, therefore any substitutes valuation is an ultra valuation, 
but not all ultra valuations are substitutes and we shall present two examples 
of families of ultra valuations that exhibit complementarities.
But, before that, note that the {\bf LLN} property shows that ultra valuations 
lie in the union of three hyperplanes in the ${2}^{n}$ dimensional Euclidean space and 
have therefore measure zero, as proved in Theorem 7 of~\cite{LLN:GEB}.
Therefore, the set of ultra valuations is, in a sense, {\em small}.

\subsection{Symmetric valuations} \label{sec:symmetric}
One easily sees that any symmetric valuation $v$, i.e., $v(A)$ depends only on $\mid A \mid$
is an ultra valuation. Symmetric valuations may exhibit complementarities.
If \mbox{$n \: = \: \mid \Omega \mid$} a symmetric valuation is defined 
by $n + 1$ parameters, \mbox{$v^{0} , v^{1} , \ldots , v^{n}$} where 
\mbox{$v(A) \: =\;$} \mbox{$v^{i}$} for any bundle $A$ of size $i$. 
But a larger family of ultra valuations whose elements are defined by $2 n$ parameters can
be considered by giving possibly a different value to each singleton.
For the parameters \mbox{$v^{0} , v^{2} , \ldots , v^{n}$} and 
\mbox{$v_{x}$} for every \mbox{$x \in \Omega$}, let \mbox{$v^{1} \: = \:$} $0$ and
define
\mbox{$v(A) \: = \:$} \mbox{$\sum_{x \in A} v_{x} + v^{k}$} for any
\mbox{$A \subseteq \Omega$}, where
\mbox{$k \: = \:$} \mbox{$\mid A \mid$}.
One easily sees that 
\mbox{$v_{A}(x \mid y) \: = \:$} \mbox{$v_{x} + v^{k + 2} - v^{k + 1} \: = \:$}
\mbox{$v_{A}(x \mid z)$}.

\subsection{Left and Right} \label{sec:LR}
Now, we shall present an interesting family of ultra valuations, formalizing the paradigmatic
example of complementarities: left and right shoes.
Let \mbox{$\Omega \: = \: L \cup R$} with \mbox{$L \cap R \: = \: \emptyset$}.
Define  \mbox{$ pairs(X) \: =\:$}
\mbox{$\min ( \mid X \cap L \mid , \mid X \cap R \mid ) $}, i.e., the number of pairs
\mbox{$(a , b )$}, with \mbox{$a \in L$} and \mbox{$b \in R$} that one can obtain from $X$.
Let \mbox{$v(X) \: = \: $}
\mbox{$f ( pairs(X) ) $} for any {\em strictly increasing concave} function 
\mbox{$f : \{ 0 , \ldots , n \} \rightarrow \cR$}, i.e., \mbox{$x \geq y$} iff 
\mbox{$f(x) \geq f(y)$} and \mbox{$f(m) \: \geq \: \frac{f(m - 1) + f(m + 1)}{2}$}.
We claim that $v$ is an ultra valuation.
We shall show that $v$ satisfies the {\bf Ultra} property.
Let \mbox{$A \subseteq \Omega$} and let \mbox{$x , y , z \in \Omega - A$} 
be pairwise distinct.
Note that \mbox{$v_{A}(x) \: \geq \:$} \mbox{$v_{A}(y)$} if and only if
\mbox{$f(pairs(A + x)) \: \geq \:$} \mbox{$f(pairs(A + y))$}, i.e.,
\mbox{$pairs(A + x) \: \geq \:$} \mbox{$pairs(A + y)$}.
If all three items $x$, $y$ and $z$ are of the same type ($L$ or $R$), then
\mbox{$v_{A}(x) \: = \: $} \mbox{$v_{A} ( y ) \: = \:$} \mbox{$v_{A} ( z )$} and
\mbox{$v_{A} ( x + y ) \: = \:$} \mbox{$v_{A} ( x + z ) \: = \:$} \mbox{$v_{A} ( y + z )$}
and therefore one has \mbox{$c_{A}(x , y) \: = \:$}
\mbox{$ c_{A}(x , z) \: = \:$} \mbox{$ c_{A}(y , z)$}, satisfying {\bf Ultra}.
Otherwise two items are of the same type and the third one is of the opposite type.
Without loss of generality, assume $x$ and $y$ are of the same type and $z$ is of the opposite
type.
We have \mbox{$v_{A}(x) \: = \: $} \mbox{$v_{A} ( y )$} and
\mbox{$v_{A} ( x + z ) \: = \:$} \mbox{$v_{A} ( y + z )$}.
Therefore \mbox{$c_{A}( x , z ) \: = \:$} \mbox{$c_{A}( y , z )$}.
To show that {\bf Ultra} is satisfied we must show that  \mbox{$c_{A}( x , z ) \: \geq \:$} \mbox{$c_{A} ( x , y )$}, i.e.,
\[
v(A + x + z) - v(A) - v(A + x) + v(A) - v(A + z) + v(A) \: \geq \: 
\]
\[
v(A + x + z) - v(A) - v(A + x) + v(A) - v(A + y) + v(A),
\]
i.e.,
\[
v(A + x + z) - v(A + z) \: \geq \: v(A + x + y) - v(A + y).
\]

If $A$ is balanced, i.e., 
\mbox{$\mid A \cap L \mid \: = \:$} \mbox{$\mid A \cap R \mid$} then 
\[
pairs(A) \: = \: pairs(A + y) \: = \: pairs(A + z) \: = \: pairs(A + x + y)
\]
and
\[
pairs(A) + 1 \: = \: pairs( A + x + z ).
\]
Therefore
\[
f(pairs(A + x + z)) - f(pairs(A + z)) \: > \: f(pairs(A + x + y)) - f(pairs(A + y))
\]
and we conclude that indeed
\mbox{$v(A + x + z) - v(A + z) \: \geq \:$} \mbox{$ v(A + x + y) - v(A + y)$}.

If $A$ has more items of the type of $x$ (and $y$) than items of the type of $z$ 
\[
pairs(A) \: = \: pairs(A + y) \: = \: pairs(A + x + y)
\]
and
\[
pairs(A) + 1 \: = \: pairs( A + x + z ) \: = \:  pairs(A + z).
\]
Therefore
\[
f(pairs(A + x + z)) - f(pairs(A + z)) \: = \: 0 \: = \: f(pairs(A + x + y)) - f(pairs(A + y))
\]
and again
\mbox{$v(A + x + z) - v(A + z) \: \geq \:$} \mbox{$ v(A + x + y) - v(A + y)$}.

If $A$ has more items of the type of $z$ we must distinguish two cases.
On one hand, if $A$ has one more item of the type of $z$ than of type $y$ then
\[
pairs(A) + 1 \: = \: pairs(A + x + y) \: = \: pairs( A + x + z ) \: = \:
pairs(A + y)
\]
and 
\[
pairs(A) \: = \: pairs(A + z)
\]
Therefore 
\mbox{$v(A + x + z) - v(A + z) \: > \: $} 
\mbox{$0 \: = \:$} \mbox{$ v(A + x + y) - v(A + y)$}.
On the other hand if $A$ has at least two more items of type $z$ than of type $x$ then
\[
pairs(A) + 2 \: = \: pairs(A + x + y),
\]
\[
pairs(A) + 1 \: = \: pairs(A + y) \: = \: pairs(A + x + z)
\]
and
\[
pairs(A) \: = \: pairs(A + z).
\]
Let \mbox{$k \: = \:$} \mbox{$pairs(A) + 1$}.
We have
\[
f(pairs(A + x + z)) - f(pairs(A + z)) \: = \: f(k) - f(k - 1) \: \geq \: 
\]
\[
f(k + 1) - f(k) \: = \:
f(pairs(A + x + y)) - f(pairs(A + y )).
\]
and again \mbox{$v(A + x + z) - v(A + z) \: \geq \:$} \mbox{$ v(A + x + y) - v(A + y)$}.

The reader is invited to show that, with respect to the left-right partition, the valuation $v$ 
above satisfies the {\em gross substitutes and complements} condition 
of Sun and Yang~\cite{Sun_Yang:econometrica}.
Therefore an economy in which all agents have such valuations has a Walrasian equilibrium.

The example above concerned with items of two different types cannot be extended 
to three types of items: 
if $x$, $y$ and $z$ are of pairwise different types and $w$ is the type of $z$ 
\mbox{$c_{w}(x , y) \: > \:$}
\mbox{$c_{w} (x , z) \: = \:$} \mbox{$c_{w}( y , z)$} contradicting {\bf Ultra}.

\section{Closure properties} \label{sec:closure}
As noticed in Lema~\ref{the:marginal} if $v$ is an ultra valuation and $p$ is a price vector then the utility $u$ defined by
\mbox{$u^{p}(A) \: = \:$} \mbox{$ v(A) - \sum_{i \in A} p_{i}$} is also an ultra valuation.

In contrast with substitutes valuations, 
ultra valuations are {\em not} closed under OR (i.e., convolution), as shown below.
Let $v$ be defined by \mbox{$v(\emptyset) = 0$}, \mbox{$v(x) = 5$}, 
\mbox{$v(y) = 10$}, \mbox{$v(z) = 15$}, \mbox{$v(xy) = 25$}, \mbox{$v(xz) = 20$},
\mbox{$v(yz) = v(xyz) = 35$}. Note that \mbox{$c(x , y) = 10$}, \mbox{$c(y , z) = 10$}
and \mbox{$c(x , z) = 0$}. The valuation $v$ satisfies {\bf Ultra} and is an ultra valuation.
Let $u$ be the symmetric valuation defined by: \mbox{$0 \rightarrow 0$}, 
\mbox{$1 \rightarrow 6$}, \mbox{$2 \rightarrow 12$} and \mbox{$3 \rightarrow 12$}.
The valuation $u$ is symmetric and therefore an ultra valuation.
Let \mbox{$w = u OR v$}.
We have \mbox{$w(x) = 6$}, \mbox{$w(y) = 10$}, \mbox{$w(z) = 15$}, 
\mbox{$w(xy) = 25$}, \mbox{$w(xz = 21$}, \mbox{$w(yz) = 35$} and
\mbox{$c(x , y ) = 9$}, \mbox{$c(y , z) = 10$} and \mbox{$c(x , z) = 0$}, 
that contradicts {\bf Ultra} and shows $w$ is not an ultra valuation.

Ultra valuations are not closed under XOR either: for $u$ and $v$ as above, 
the valuation \mbox{$ ( u \: XOR \: v )$} is not an ultra valuation.

The sum of different ultra valuations on disjoint sets of items is an ultra valuation.
More precisely if \mbox{$\Omega_{0} , \ldots , \Omega_{m}$} is a partition of $\Omega$
and if $v_{i}$ is an ultra valuation on $\Omega_{i}$, the valuation 
\mbox{$v \: = \:$} \mbox{$ \sum_{i = 0}^{m} v_{i}$} is an ultra valuation: $v$ 
satisfies the {\bf Exchange} property.
The example of Section~\ref{sec:examples}  fits the situation of a shop selling shoes of only one size and one make,
but it can be extended to a shop selling different makes and sizes.
Every pair of left and right shoes of the same make and the same size has a value that 
depends on the make and perhaps the size.
The value of a bundle is the sum of the values of the different bundles of pairs of a specific make
and size. This valuation is the sum of the different ultra valuations on each subdomain
defined by a make and a size and is therefore an ultra valuation.

Ultra valuations are {\em not} closed under the addition of a dummy item to $\Omega$.
A study of the effect of dummy items can be found in Appendix~\ref{app:dummy}.

\section{A characterization of substitutes valuations} \label{sec:ultra_sub}
In~\cite{RGP:ET} Theorem 10 establishes that substitutes valuations are exactly those
submodular valuations that satisfy {\bf RGP}.
An alternative proof is proposed below.

\begin{theorem} \label{the:ultra_sub}
A valuation $v$ is substitutes iff
\begin{enumerate}
\item $v$ is submodular, and
\item $v$ is an ultra valuation.
\end{enumerate}
\end{theorem}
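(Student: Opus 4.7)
The plan is to handle the two directions separately, with the easy direction first. Suppose $v$ is substitutes. That $v$ is then submodular is classical (see e.g.\ Gul and Stacchetti~\cite{GulStacc:99}). That $v$ satisfies {\bf LLN} is Claim~1 and Lemma~4 of~\cite{LLN:GEB}, as already noted at the opening of Section~\ref{sec:examples}; by Theorem~\ref{the:equivalence}, $v$ is an ultra valuation. This settles the forward implication without new work.

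For the converse, assume $v$ is submodular and ultra. I would derive the $M^{\natural}$-exchange property of Definition~\ref{def:M-natural}, which by Fujishige and Yang~\cite{FujiYang:03} is equivalent to substitutes: for all $A, B \subseteq \Omega$ and all $x \in A - B$, either $v(A) + v(B) \leq v(A - x) + v(B + x)$, or there is some $y \in B - A$ with $v(A) + v(B) \leq v(A - x + y) + v(B - y + x)$. I would split into three cases. If $\mid A \mid \: \leq \: \mid B \mid$, {\bf Exchange} from Definition~\ref{def:u-val} directly yields the second alternative. If $\mid A \mid \: > \: \mid B \mid$ and $B \subseteq A$, then $(A - x) \cup (B + x) = A$ and $(A - x) \cap (B + x) = B$, so submodularity applied to $A - x$ and $B + x$ gives the first alternative.

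The substantive case is $\mid A \mid \: > \: \mid B \mid$ with $B \not\subseteq A$. Here the plan is to apply {\bf Exchange} in reversed roles to the pair $(B, A - x)$, whose sizes satisfy $\mid B \mid \: \leq \: \mid A - x \mid$, choosing some $z \in B - A$ to play the role of the exchanged element. This produces $w \in (A - B) - x$ with $v(B) + v(A - x) \leq v(B - z + w) + v(A - x - w + z)$, and the goal becomes to convert this into a bound on $v(A) + v(B)$ involving $A - x + z$ and $B - z + x$ in place of $A - x - w + z$ and $B - z + w$.

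The main obstacle is precisely this conversion. One must invoke submodularity (decreasing marginals) to compare the marginal of $w$ at $A - x - w + z$ against its marginal at $B - z$, and symmetrically the marginal of $x$ at $A - x$ against its marginal at $B - z$, then add the two to close the gap. Since neither $B - z \subseteq A - x - w + z$ nor the reverse inclusion holds in general, the cleanest route is probably an induction on $\mid A \mid - \mid B \mid$ (or on $\mid A \triangle B \mid$) that reduces this residual case to the first two, using the freedom to choose $z$ within $B - A$ and, if needed, the finer information furnished by the {\bf Ultra} reformulation in Section~\ref{sec:ultra}, namely that under submodularity the complementarities $c_{X}$ are non-positive and satisfy an ultrametric-like inequality.
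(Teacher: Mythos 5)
Your forward direction and your overall strategy for the converse (derive $M^{\natural}$-concavity, then invoke Fujishige--Yang) coincide with the paper's. The cases $\mid A \mid \: \leq \: \mid B \mid$ and $B \subseteq A$ are handled correctly --- the second via a direct submodularity computation that is a nice shortcut the paper does not even need to isolate. But the substantive case $\mid A \mid \: > \: \mid B \mid$ with $B \not\subseteq A$ is left genuinely open. Applying {\bf Exchange} to the pair $(B, A-x)$ with a chosen $z \in B - A$ yields $w \in (A - B) - x$ and the inequality $v(B) + v(A - x) \leq v(B - z + w) + v(A - x - w + z)$, but this has the wrong shape on both sides: you need to trade $v(A-x)$ for $v(A)$ and simultaneously remove the stray $w$ from both right-hand bundles, and as you yourself observe, neither inclusion needed to invoke decreasing marginals holds. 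The closing suggestion (``induction on $\mid A \mid - \mid B \mid$ \ldots using the finer information furnished by {\bf Ultra}'') is a plan to have a plan, not an argument; in particular it is not clear which element of $A - B$ should be discarded at each inductive step, and that choice is exactly where the difficulty sits.

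The paper closes this gap by a different device (Theorem~\ref{the:natural} in Appendix~\ref{app:substitutes}): pad $B$ with $k = \mid A \mid - \mid B \mid$ dummy items to form $B'$ with $\mid A \mid \leq \mid B' \mid$, note that the extended valuation $v'$ is still an ultra valuation \emph{because $v$ is submodular} (Lemma~\ref{the:d2} --- and Lemma~\ref{the:d1} shows submodularity is genuinely needed here), and apply {\bf Exchange} to $(A, B')$. If the witness $y$ is a real item one gets the second alternative of $M^{\natural}$-concavity; if $y$ is a dummy, the exchange $v'(A - x + y) + v'(B' - y + x) = v(A - x) + v(B + x)$ produces precisely the first alternative. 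This is where the ``either/or'' structure of Definition~\ref{def:M-natural} comes from, and it localizes the use of submodularity to a single lemma rather than to the marginal comparisons you were attempting. I would recommend either adopting this padding argument or, if you want to pursue your reversed-exchange route, first proving the special case $\mid A \mid = \mid B \mid + 1$ cleanly, since even there the conversion step you describe does not yet go through.
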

\begin{proof}
Gul and Stacchetti~\cite{GulStacc:99} proved that any substitutes valuation is submodular.
Lehmann, Lehmann and Nissan~\cite{LLN:GEB}'s Claim 1, together with the folklore
remark that
the conditional valuation $v_{A}$ is substitutes if $v$ is substitutes, 
proves that any substitutes valuation is an ultra valuation.

Suppose now that $v$ is a submodular ultra valuation, we want to prove that it is
substitutes.
We shall prove that any submodular ultra valuation is $M^{\natural}$-concave
(see Definition~\ref{def:M-natural} below)
and that any such valuation is substitutes.
Murota and Shioura introduced $M^{\natural}$-concave valuations in~\cite{MurotaShioura:99}
in the context of convex discrete optimization.
\begin{definition} \label{def:M-natural}
A valuation $v$ is said to be $M^{\natural}$-concave iff 
for any \mbox{$A , B \subseteq \Omega$} and any \mbox{$x \in A - B$} one of the two
following properties holds:
\begin{enumerate}
\item \mbox{$v(A) + v(B) \: \leq \: v(A - x) + v(B + x)$}, or
\item there exists some \mbox{$y \in B - A$} such that
\[
v(A) + v(B) \: \leq \: v(A - x + y) + v(B - y + x).
\]
\end{enumerate}
\end{definition}
Note the difference with the Exchange property defining ultra valuations: the condition 
\mbox{$\mid A\mid \: \leq \: \mid B \mid$} is dropped but a second possibility is opened.

The proofs that any submodular ultra valuation is $M^{\natural}$-concave and that 
any $M^{\natural}$-concave valuation is substitutes appear in Appendix~\ref{app:substitutes}.
The latter is the easy part of a result of Fujishige and Yang~\cite{FujiYang:03} showing
that a valuation is substitutes iff it is $M^{\natural}$-concave.
\end{proof} 

\section{Preferred bundles} \label{sec:preferred}
For substitutes valuations there is an important literature on the relations between 
preferred bundles, i.e., 
on the properties of the demand correspondence, initiated 
by Gul and Stacchetti~\cite{GulStacc:99} and their characterization of substitutes valuations
by the single improvement property that presents the advantage of not involving prices.
Since ultra valuations are not necessarily substitutes, they do not always satisfy the
single improvement property. 
We shall propose weaker properties.

For any natural number \mbox{$k \: \leq \: n$} ($n$ is the size of $\Omega$)
we shall call the set of bundles of size $k$ 
the $k$-slice of $\Omega$.
A bundle that maximizes a valuation $v$ among all bundles of the $k$-slice is called
a $k$-preferred bundle.

Our first result examines the relation between two preferred bundles (in the same slice or
in different slices).
\begin{theorem} \label{the:2-k}
Let $v$ be an ultra valuation and let \mbox{$A , B \subseteq \Omega$} be bundles of size
$k_{1}$ and $k_{2}$ respectively with 
\mbox{$k_{1} \: \leq \:$} \mbox{$ k_{2}$}.
Assume further that $A$ is a $k_{1}$-preferred bundle and 
that $B$ is a $k_{2}$-preferred bundle.
If \mbox{$x \in A - B$} then there exists some \mbox{$y \in B - A$} such that
the bundle \mbox{$A - x + y$} is a $k_{1}$-preferred bundle 
and \mbox{$B - y + x$} is a $k_{2}$-preferred bundle.
\end{theorem}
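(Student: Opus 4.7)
The plan is to apply the \textbf{Exchange} property of Definition~\ref{def:u-val} directly, and then exploit the optimality of $A$ and $B$ in their respective slices to turn the resulting inequality into an equality.

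First I would invoke \textbf{Exchange} for the pair $(A,B)$: since $|A| \leq |B|$ and $x \in A - B$, there exists some $y \in B - A$ with
\[
v(A) + v(B) \leq v(A - x + y) + v(B - y + x).
\]
Crucially, the swap preserves sizes: $|A - x + y| = k_1$ and $|B - y + x| = k_2$. So both candidate bundles live in the same slices as $A$ and $B$ respectively.

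Next I would use the hypothesis that $A$ is $k_1$-preferred and $B$ is $k_2$-preferred. These give $v(A) \geq v(A - x + y)$ and $v(B) \geq v(B - y + x)$, whose sum is exactly the reverse of the \textbf{Exchange} inequality. Combining the two, the inequalities collapse to equalities termwise: $v(A) = v(A - x + y)$ and $v(B) = v(B - y + x)$. This is enough to conclude, since a bundle of size $k_i$ achieving the maximum value in its slice is itself $k_i$-preferred; so $A - x + y$ is $k_1$-preferred and $B - y + x$ is $k_2$-preferred, with the same $y$ witnessing both conclusions.

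There is essentially no obstacle here: the whole argument is a one-shot application of \textbf{Exchange} followed by a squeeze using optimality. The only subtlety worth stating explicitly is that \textbf{Exchange} produces the \emph{same} element $y$ for both swaps, which is precisely what allows the two optimality inequalities to combine and force equality on both sides simultaneously.
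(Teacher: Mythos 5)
Your proposal is correct and follows exactly the paper's own argument: a single application of \textbf{Exchange} to the pair $(A,B)$, followed by the observation that the slice-optimality of $A$ and $B$ forces both inequalities to collapse to equalities. Nothing to add.
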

\begin{proof}
Since \mbox{$\mid A \mid \: \leq \:$} \mbox{$\mid B \mid$} {\bf Exchange} implies
the existence of a \mbox{$y \in B - A$} such that
\mbox{$v(A) + v(B) \: \leq \:$} \mbox{$v(A - x + y) + v(B - y + x)$}.
But, by our assumptions, \mbox{$v(A) \: \geq \:$} \mbox{$v(A - x + y)$} and
\mbox{$v(B) \: \geq \:$} \mbox{$v(B - y + x)$} and therefore
\mbox{$v(A - x + y) \: = \:$} \mbox{$v(A)$} and 
\mbox{$v(B - y + x) \: = \:$} \mbox{$v(B)$}.
\end{proof}

Our next result is the basis for proving the correctness of the greedy algorithm for finding
an optimal bundle.
\begin{theorem} \label{the:co-greedy}
Let $v$ be an ultra valuation, \mbox{$0 \: < \: k \: < \: n$}
and let $A$ be a $k$-preferred bundle.
Then
\begin{enumerate}
\item  there is an item \mbox{$x \in \Omega - A$} such that $A + x$ 
is a $k + 1$-preferred bundle, and
\item there is an item \mbox{$x \in A$} such that $A - x$ is a $k - 1$-preferred bundle.
\end{enumerate}
\end{theorem}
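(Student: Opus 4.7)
The plan is to derive both claims from Theorem~\ref{the:2-k} by an extremal argument: in each case I pick a preferred bundle in the target slice that has maximal intersection with $A$, and show that unless it is already the bundle we want, a single application of Theorem~\ref{the:2-k} produces a strictly better candidate, yielding a contradiction.

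For part~1, since $k < n$ the $(k+1)$-slice is nonempty, so a $(k+1)$-preferred bundle exists. Among all such bundles $B$, pick one that maximizes $\mid A \cap B \mid$. If $A \subseteq B$, then $\mid B - A \mid = 1$, and writing $B = A + x$ for the unique $x \in B - A$ finishes the proof. Otherwise pick any $x \in A - B$; since $\mid A \mid = k < k+1 = \mid B \mid$, Theorem~\ref{the:2-k} applies with $A$ in the smaller role and yields some $y \in B - A$ such that $B' = B - y + x$ is again $(k+1)$-preferred (and $A - x + y$ is $k$-preferred, which we do not need). Since $y \notin A$ while $x \in A$, a direct set-theoretic computation gives $A \cap B' = (A \cap B) + x$, so $\mid A \cap B' \mid = \mid A \cap B \mid + 1$, contradicting the maximality of $\mid A \cap B \mid$.

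Part~2 is the mirror image. Since $k > 0$ the $(k-1)$-slice is nonempty. Among all $(k-1)$-preferred bundles $B$, pick one maximizing $\mid A \cap B \mid$. If $B \subseteq A$, then $\mid A - B \mid = 1$ and $A - x = B$ for the unique $x \in A - B$, so we are done. Otherwise pick any $x \in B - A$; now $\mid B \mid = k - 1 < k = \mid A \mid$, so Theorem~\ref{the:2-k} applies with $B$ in the smaller role and yields $y \in A - B$ such that $B' = B - x + y$ is again $(k-1)$-preferred. Since $x \notin A$ while $y \in A$, the same kind of bookkeeping gives $A \cap B' = (A \cap B) + y$, again contradicting maximality.

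The only step that could trip one up is keeping the intersection bookkeeping straight when swapping, together with making sure to apply Theorem~\ref{the:2-k} with the correct bundle playing the role of the smaller set in each of the two parts (it is $A$ in part~1 and $B$ in part~2). Beyond that, there is no real obstacle: Theorem~\ref{the:2-k} does all the heavy lifting, and the boundary conditions $0 < k < n$ are exactly what is needed for the target slices to be nonempty.
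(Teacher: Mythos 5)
Your proof is correct and follows essentially the same route as the paper's: both arguments select an extremal $(k\pm 1)$-preferred bundle (maximizing $\mid A \cap B \mid$ is equivalent to the paper's minimizing $\mid B - A \mid$, since $\mid B \mid$ is fixed within a slice) and use Theorem~\ref{the:2-k} to swap one item and contradict extremality. No substantive difference.
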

\begin{proof}
Let $B$ be one of the $k + 1$-preferred bundles that is closest to $A$,
i.e., \mbox{$\mid B - A \mid$} is minimal.
If there exists some item \mbox{$x \in A - B$}, then Theorem~\ref{the:2-k} implies
that there is some \mbox{$y \in B - A$} such that $B - y + x$ is a $k + 1$-preferred bundle. 
But $B - y + x$ is strictly closer to $A$ than $B$, contradicting our assumption.
We conclude that there is no such item $x$ and \mbox{$A \subseteq B$}.
Similarly for the second claim: if $B$ is one of the $k - 1$-preferred bundles
that is closest to $A$, there can be no \mbox{$x \in B - A$} and \mbox{$B \subseteq A$}.
\end{proof}

\begin{corollary} \label{the:kk+1}
Let $v$ be an ultra valuation, $S$ a $k$-preferred bundle
and \mbox{$x \: = \:$} \mbox{${\rm argmax}_{x \in \Omega - S} v(x \mid S)$}.
Then, \mbox{$S + x$} is a $k+1$-preferred bundle.
\end{corollary}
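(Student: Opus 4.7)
The plan is to deduce the corollary as an immediate consequence of the existence statement in Theorem~\ref{the:co-greedy}. Since $v$ is an ultra valuation and $S$ is $k$-preferred with $k < n$, Theorem~\ref{the:co-greedy} guarantees that there exists some item $z \in \Omega - S$ for which $S + z$ is a $(k+1)$-preferred bundle. Call the optimal value on the $(k+1)$-slice $M$, so that $v(S + z) = M$.

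Next I would compare this witness $z$ to the greedy choice $x$. By definition, $x$ maximizes the marginal $v(x' \mid S)$ over all $x' \in \Omega - S$, so in particular $v(x \mid S) \geq v(z \mid S)$. Expanding $v(S + x) = v(S) + v(x \mid S)$ and similarly for $z$, this gives $v(S + x) \geq v(S + z) = M$. Since $\mid S + x \mid = k + 1$, the bundle $S + x$ attains the $(k+1)$-slice maximum and is therefore $(k+1)$-preferred.

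There is no real obstacle here: all the structural work has been done in Theorem~\ref{the:co-greedy}, whose proof in turn relies on Theorem~\ref{the:2-k} and, ultimately, on the \textbf{Exchange} property. The only thing to be mindful of is that $x$ need not be unique, but the argument only uses the value $v(x \mid S)$, so any choice of argmax works. Note also that the hypothesis $k < n$ is implicit in the statement, since the set $\Omega - S$ must be nonempty for $x$ to be defined.
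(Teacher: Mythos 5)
Your proof is correct and follows exactly the route the paper intends: the paper's entire proof is "By Theorem~\ref{the:co-greedy}," and your write-up simply makes explicit the (short) argument that the argmax item $x$ yields $v(S+x) \geq v(S+z)$ for the witness $z$ supplied by that theorem. No gaps.
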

\begin{proof}
By Theorem~\ref{the:co-greedy}.
\end{proof}

\begin{theorem} \label{the:greedy}
If $v$ is an ultra valuation and $A$ a $k$-preferred bundle, then
\begin{enumerate}
\item for any $m$, \mbox{$0 \: \leq \: m \: < \: k$} there is an $m$-preferred bundle $B$ 
such that \mbox{$B \subset A$}, and
\item for any $m$ \mbox{$ k \: < \: m \: \leq \: n$} there is an $m$-preferred bundle $B$ 
such that \mbox{$A \subset B$}.
\end{enumerate}
\end{theorem}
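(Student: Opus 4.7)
The plan is to derive this as a routine induction on $|k-m|$ using Theorem~\ref{the:co-greedy} as the one-step case. The theorem we have just proved already gives us the atomic moves: from any $j$-preferred bundle $C$ with $0<j<n$ we can strip one element to get a $(j-1)$-preferred bundle $C-x\subset C$, and we can adjoin one element to get a $(j+1)$-preferred bundle $C+x\supset C$. Iterating either move will produce the monotone chain of preferred bundles that the theorem asks for.

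For the first claim, I would do downward induction on $m$ starting from $m=k-1$. Set $A_k=A$. By Theorem~\ref{the:co-greedy}, since $A_k$ is a $k$-preferred bundle and $k>0$, there is some $x_k\in A_k$ with $A_{k-1}\eqdef A_k-x_k$ a $(k-1)$-preferred bundle, and clearly $A_{k-1}\subset A_k$. Apply Theorem~\ref{the:co-greedy} again to $A_{k-1}$ (provided $k-1>0$) to obtain $A_{k-2}\subset A_{k-1}$, and so on. Since each step strictly decreases cardinality, after $k-m$ steps we obtain an $m$-preferred bundle $B=A_m$ with $B\subset A_{m+1}\subset\cdots\subset A_k=A$, and the inclusion is strict because $|B|<|A|$.

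The second claim is completely symmetric: perform upward induction on $m$ starting from $m=k+1$, using the first part of Theorem~\ref{the:co-greedy} instead. Set $A_k=A$; at each step $j<n$, obtain from the $j$-preferred bundle $A_j$ an item $x_j\in\Omega-A_j$ such that $A_{j+1}\eqdef A_j+x_j$ is a $(j+1)$-preferred bundle, and clearly $A_j\subset A_{j+1}$. After $m-k$ steps we reach the desired $m$-preferred superset $B$ of $A$.

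There is no real obstacle here: the whole content lies in Theorem~\ref{the:co-greedy}, which was proved using {\bf Exchange} via Theorem~\ref{the:2-k}. The only thing worth noting in the write-up is that the hypotheses $0<j<n$ required by Theorem~\ref{the:co-greedy} are satisfied throughout each induction: in the first claim all intermediate sizes $j$ satisfy $0<m<j\le k<n$ (so $j>0$ and $j\le k<n$, hence $j<n$), and in the second claim they satisfy $0<k<j\le m\le n$, with strict inequality $j<n$ precisely when we still need to apply the lemma. So the inductions go through without edge-case trouble.
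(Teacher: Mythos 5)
Your proof is correct and is exactly the paper's argument: the paper proves this theorem in a single line, ``By a repeated use of Theorem~\ref{the:co-greedy}.'' One tiny quibble with your hypothesis check: the chain ``$j \le k < n$'' assumes $k<n$, which is not a hypothesis of the present theorem; in the edge case $k=n$ the downward iteration still goes through because the second part of Theorem~\ref{the:co-greedy} only needs $j>0$ (as its proof shows), so nothing breaks.
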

\begin{proof}
By a repeated use of Theorem~\ref{the:co-greedy}.
\end{proof}

We shall say that \mbox{$A \subseteq \Omega$} is a preferred bundle (for $v$) iff 
it maximizes $v$ over all bundles, i.e., \mbox{$v(A) \: \geq \:$}
\mbox{$v(B)$} for any \mbox{$B \subseteq \Omega$}.

\begin{theorem} \label{the:all-pref}
Let $v$ be an ultra valuation and let \mbox{$A , B \subseteq \Omega$} be preferred bundles
with \mbox{$k_{1} \: = \:$} \mbox{$\mid A \mid \: \leq \:$} \mbox{$k_{2} \: = \:$}
\mbox{$ \mid B \mid$}.
Then, 
\begin{enumerate}
\item there is a preferred bundle $C$ such that \mbox{$\mid C \mid \: = \:$} $k_{1}$
and \mbox{$C \subseteq B$}, and
\item  there is a preferred bundle $D$ such that \mbox{$\mid D \mid \: = \:$} $k_{2}$
and \mbox{$A \subseteq D$}, 
\end{enumerate}
\end{theorem}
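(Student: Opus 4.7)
The plan is to reduce everything to Theorem~\ref{the:greedy} by exploiting the fact that a globally preferred bundle is, in particular, a $k$-preferred bundle for $k$ equal to its size, and conversely any $k$-preferred bundle whose value equals the global maximum is itself globally preferred. Since $A$ and $B$ are both preferred, \mbox{$v(A) = v(B) = M$}, where $M$ denotes the maximum value of $v$ over all subsets of $\Omega$. In particular, $A$ is a $k_1$-preferred bundle (any $k_1$-bundle has value at most $M = v(A)$) and $B$ is a $k_2$-preferred bundle.

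For the first claim, if \mbox{$k_1 = k_2$} we simply take \mbox{$C = B$}. Otherwise \mbox{$k_1 < k_2$}, and I would invoke Theorem~\ref{the:greedy}(1) applied to the $k_2$-preferred bundle $B$ with \mbox{$m = k_1$}: this yields a $k_1$-preferred bundle \mbox{$C \subset B$}. Since $C$ has size $k_1$, we get \mbox{$v(C) \geq v(A) = M$} because $C$ is $k_1$-preferred and $A$ is a $k_1$-bundle; but $M$ is the global maximum, so \mbox{$v(C) = M$}, which means $C$ is a preferred bundle of size $k_1$ contained in $B$, as required.

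The second claim is symmetric: if \mbox{$k_1 = k_2$} take \mbox{$D = A$}; otherwise apply Theorem~\ref{the:greedy}(2) to the $k_1$-preferred bundle $A$ with \mbox{$m = k_2$} to obtain a $k_2$-preferred bundle \mbox{$D \supset A$}. Since $B$ is a $k_2$-bundle with \mbox{$v(B) = M$}, the $k_2$-preferredness of $D$ gives \mbox{$v(D) \geq v(B) = M$}, hence \mbox{$v(D) = M$} and $D$ is a preferred bundle of size $k_2$ containing $A$.

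There is no real obstacle here beyond making the two observations above explicit; the substantive content has already been extracted in Theorem~\ref{the:co-greedy} and iterated in Theorem~\ref{the:greedy}. The only care needed is the edge case \mbox{$k_1 = k_2$}, which Theorem~\ref{the:greedy} does not cover since its hypotheses require strict inequalities between $k$ and $m$, but this case is trivial.
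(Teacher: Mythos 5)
Your proof is correct and follows exactly the route the paper takes: apply Theorem~\ref{the:greedy} to the $k_2$-preferred bundle $B$ (resp.\ the $k_1$-preferred bundle $A$) and then observe that the resulting $k_1$-preferred (resp.\ $k_2$-preferred) bundle attains the global maximum because $A$ (resp.\ $B$) already does, hence is a preferred bundle. The paper compresses this into one sentence; your version merely makes the value comparison and the trivial $k_1=k_2$ case explicit.
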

\begin{proof}
By Theorem~\ref{the:greedy} and the remark that any $k_{i}$-preferred bundle is a preferred
bundle (\mbox{$i = 1 , 2$}).
\end{proof}

\section{Item prices} \label{sec:prices}
The interest in substitutes valuations has been motivated by their properties under a change
in the prices of the items.
We shall now study how preferred sets of an ultra valuation react to a change in prices.
Given a price vector $p$ a preferred bundle for $v$ under prices $p$ is a preferred bundle
for the valuation \mbox{$u_{p}$} defined by 
\mbox{$u_{p}(A) \: = \:$} \mbox{$v(A) - \sum_{x \in A} p_{x}$}.
We noticed in Section~\ref{sec:closure} that $u_{p}$ is an ultra valuation if $v$ is.
Our first result considers an increase in the price of a single item.

\begin{theorem} \label{the:p-k}
Let $v$ be an ultra valuation, $A$ a bundle of size $k$ and \mbox{$x \in A$}.
Let $p$ a price vector for the items and let $p'$ be a price vector such that 
\mbox{$p'_{y} \: = \:$} \mbox{$p_{y}$} 
for any \mbox{$y \in \Omega - \{ x \}$} and \mbox{$p'_{x} \: \geq \:$} \mbox{$ p_{x} $}.
If $A$ is a $k$-preferred bundle at prices $p$, 
then either
\begin{enumerate}
\item $A$ is a $k$-preferred bundle at prices $p'$, or
\item there is some \mbox{$ y \in \Omega - A$} such that \mbox{$A - x + y$} 
is a $k$-preferred bundle at prices $p'$.
\end{enumerate}
\end{theorem}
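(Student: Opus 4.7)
The plan is to work with the utility valuations $u_p$ and $u_{p'}$ defined by $u_p(C) = v(C) - \sum_{z\in C} p_z$ and similarly for $p'$. By Lemma~\ref{the:marginal} both are ultra valuations, and a $k$-preferred bundle at prices $p$ (resp.\ $p'$) is by definition a $k$-preferred bundle of $u_p$ (resp.\ $u_{p'}$). Fix an arbitrary $k$-preferred bundle $B$ of $u_{p'}$; the task reduces to showing that either $B$ can be taken to be $A$, or some $A - x + y$ with $y \in \Omega - A$ achieves the same value as $B$ under $u_{p'}$.

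I would split on whether $x \in B$. If $x \in B$, then since $x \in A$ as well, the passage from $p$ to $p'$ decreases both $u(A)$ and $u(B)$ by exactly $p'_x - p_x$, so $u_{p'}(A) - u_{p'}(B) = u_p(A) - u_p(B) \geq 0$, the last inequality holding because $A$ is $k$-preferred at prices $p$ and $|A| = |B| = k$. Hence $u_{p'}(A) = u_{p'}(B)$ and alternative~(1) holds.

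The interesting case is $x \notin B$, so $x \in A - B$ with $|A| = |B|$. Here I would invoke {\bf Exchange} for the ultra valuation $u_{p'}$: there exists $y \in B - A$ such that
\[
u_{p'}(A) + u_{p'}(B) \: \leq \: u_{p'}(A - x + y) + u_{p'}(B - y + x).
\]
This $y$ lies in $\Omega - A$, so it is the candidate for alternative~(2). To finish, I would derive two opposite inequalities on the same sum. First, since $B$ is $k$-preferred under $p'$, $u_{p'}(B) \geq u_{p'}(A - x + y)$. Second, since $A$ is $k$-preferred under $p$, $u_p(A) \geq u_p(B - y + x)$; because both $A$ and $B - y + x$ contain $x$, the price shift subtracts the same quantity $p'_x - p_x$ from both utilities, giving $u_{p'}(A) \geq u_{p'}(B - y + x)$. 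Adding these two inequalities shows the reverse of the {\bf Exchange} inequality, so all four terms are forced into equality. In particular $u_{p'}(A - x + y) = u_{p'}(B)$, proving that $A - x + y$ is a $k$-preferred bundle at prices $p'$.

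The only subtle point, and the one to state carefully, is the bookkeeping of the price change: the argument that $u_{p'}(A) \geq u_{p'}(B - y + x)$ succeeds precisely because $x$ appears in both bundles, so the modification $p \to p'$ is neutral on this comparison. No heavier machinery is needed beyond Lemma~\ref{the:marginal} and the {\bf Exchange} property itself.
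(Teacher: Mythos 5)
Your proof is correct, but it takes a genuinely different route from the paper's. The paper argues dynamically: it raises the price of $x$ continuously, observes that $A$ remains $k$-preferred until some threshold price $p_{0}$ at which its utility ties with that of a $k$-bundle not containing $x$, invokes Theorem~\ref{the:2-k} at that intermediate price to produce a $k$-preferred bundle of the form $A - x + y$, and then notes that this bundle (being $x$-free) persists as $k$-preferred under any further increase. You instead work statically at the final price $p'$: you case-split on whether a $k$-preferred bundle $B$ of $u_{p'}$ contains $x$, and in the nontrivial case you apply {\bf Exchange} directly to $u_{p'}$ and close the loop with two majorizations --- one from preferredness of $B$ at $p'$, one from preferredness of $A$ at $p$ transferred to $p'$ via the observation that the price perturbation is neutral between two bundles that both contain $x$. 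In effect you re-derive the content of Theorem~\ref{the:2-k} in situ, with one of its two input inequalities imported from the old price vector. What your version buys is rigor and self-containment: it avoids the slightly informal ``assume this happens at prices $p_{0}$'' step (and the implicit handling of the case where no tie ever occurs, which silently lands in alternative~(1)). What the paper's version buys is the economic picture of demand evolving along a price path, which it reuses in the discussion of ascending auctions. All the individual steps you use --- that $u_{p'}$ is an ultra valuation by Lemma~\ref{the:marginal}, that $B - A \neq \emptyset$ since $x \in A - B$ and $|A| = |B|$, and the forced equality of all four terms --- check out.
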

\begin{proof}
When the price of $x$ increases the utility of all bundles containing $x$ decreases equally, 
whereas the utility of all bundles not containing $x$ stays unchanged.
Bundle $A$ stays preferred among bundles of size $k$ until its utility equals that of a bundle
of size $k$ that does not contain $x$. 
Assume this happens at prices $p_{0}$.
Since $u_{p_{0}}$ is an ultra valuation,
by Theorem~\ref{the:2-k} there is a $k$-preferred bundle of the form \mbox{$A - x + y$}
for some \mbox{$y \in \Omega - A$}. 
A further increase in the price of $x$ does not affect the utility of such a bundle that stays
preferred among bundles of size $k$.
\end{proof}

A similar result holds when the price of $x$ decreases. The proof is similar.
\begin{theorem} \label{the:p-k-d}
Let $v$ be an ultra valuation, $A$ a bundle of size $k$ and \mbox{$x \in \Omega - A$}.
Let $p$ a price vector for items and let $p'$ be a price vector such that 
\mbox{$p'_{y} \: = \:$} \mbox{$p_{y}$} 
for any \mbox{$y \in \Omega - \{ x \}$} and \mbox{$p'_{x} \: \leq \:$} \mbox{$ p_{x} $}.
If $A$ is a $k$- preferred bundle at prices $p$, 
then either
\begin{enumerate}
\item $A$ is a $k$-preferred bundle at prices $p'$, or
\item there is some \mbox{$ y \in A$} such that \mbox{$A - y + x$} is a $k$-preferred
bundle at prices $p'$.
\end{enumerate}
\end{theorem}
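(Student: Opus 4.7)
The plan is to mirror the proof of Theorem~\ref{the:p-k}, sliding the price of $x$ continuously from $p_x$ down to $p'_x$. Let $q$ denote the price vector that agrees with $p$ on $\Omega - \{x\}$, with variable coordinate $q_x$; for each value of $q_x$ the utility $u_q$ is an ultra valuation, as observed at the start of Section~\ref{sec:closure}. Because $x \notin A$, the value $u_q(A)$ does not depend on $q_x$, whereas $u_q(C)$ increases at rate $1$ (as $q_x$ decreases) for every bundle $C$ containing $x$ and is constant in $q_x$ for every bundle not containing $x$.

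Consequently $A$ remains a $k$-preferred bundle for $u_q$ as $q_x$ decreases from $p_x$ until, possibly, some size-$k$ bundle containing $x$ catches up. If no such catch-up occurs for $q_x \in [p'_x, p_x]$, then $A$ is still $k$-preferred at $q_x = p'_x$, giving case~1.

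Otherwise, let $p_0 \in [p'_x, p_x]$ be the largest value of $q_x$ for which some size-$k$ bundle $B$ with $x \in B$ satisfies $u_{p_0}(B) = u_{p_0}(A)$; both $A$ and $B$ are then $k$-preferred for $u_{p_0}$. Since $|A| = |B| = k$ and $x \in B - A$, Theorem~\ref{the:2-k} applied to the ultra valuation $u_{p_0}$, with the roles arranged so that $x$ is removed from $B$, produces some $y \in A - B$ such that $A - y + x$ is $k$-preferred for $u_{p_0}$. For $q_x \in [p'_x, p_0]$ the utility of $A - y + x$ and of every other bundle containing $x$ increases by the same amount $p_0 - q_x$, while the utility of bundles not containing $x$ stays fixed. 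Hence $A - y + x$ remains $k$-preferred throughout $[p'_x, p_0]$, and taking $q_x = p'_x$ delivers case~2.

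The only subtle point is to notice that Theorem~\ref{the:2-k} is being applied in the mirror direction from its use in Theorem~\ref{the:p-k}: here $x$ must be taken from $B - A$ rather than from $A - B$. This is legitimate precisely because both bundles have the same size $k$, so the hypothesis $k_1 \leq k_2$ holds with the roles of the two bundles interchanged. I do not anticipate any further obstacle.
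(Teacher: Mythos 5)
Your proposal is correct and follows essentially the same route as the paper, which for this theorem simply states that the proof is ``similar'' to that of Theorem~\ref{the:p-k}: slide the price of $x$ downward, note that $A$ stays $k$-preferred until a size-$k$ bundle containing $x$ ties it, and at that tie point invoke Theorem~\ref{the:2-k} to obtain a $k$-preferred bundle of the form $A - y + x$, which then remains $k$-preferred as the price falls further. Your observation that Theorem~\ref{the:2-k} must be applied with the two equal-size bundles in the mirrored roles (removing $x$ from the bundle that contains it) is exactly the one adaptation the paper leaves implicit, and it is handled correctly.
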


Our next result is concerned with preferred bundles.
\begin{theorem} \label{the:p-all}
Let $v$ be an ultra valuation, \mbox{$A \subseteq \Omega$} and \mbox{$x \in A$}.
Let $p$ a price vector for items and let $p'$ be a price vector such that 
\mbox{$p'_{y} \: = \:$} \mbox{$p_{y}$} 
for any \mbox{$y \in \Omega - \{ x \}$} and \mbox{$p'_{x} \: \geq \:$} \mbox{$ p_{x} $}.
If $A$ is a preferred bundle at prices $p$, then either
\begin{enumerate}
\item $A$ is a preferred bundle at prices $p'$, or
\item there is no preferred bundle at prices $p'$ that contains $x$ and 
 there is a bundle \mbox{$B \subseteq A - x$} such that either
\begin{enumerate}
\item $B$ is a preferred bundle at prices $p'$, or
\item there is some \mbox{$ y \in \Omega - B$}, \mbox{$y \neq x$} 
such that \mbox{$B + y$} is a preferred bundle at prices $p'$.
\end{enumerate}
\end{enumerate}
\end{theorem}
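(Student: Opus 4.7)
The strategy extends the continuous-price argument used in the proof of Theorem~\ref{the:p-k}. I would parametrise the price change by a family $q_t$ that agrees with $p$ off $x$ and satisfies $q_{t,x} = p_x + t$, so that $q_0 = p$ and $q_T = p'$ for $T = p'_x - p_x$. Let $M_x(t)$ and $M_{\overline{x}}(t)$ denote the maximum $u_{q_t}$-utility over bundles containing and not containing $x$ respectively: $M_x$ decreases linearly with slope $-1$, while $M_{\overline{x}}$ is constant. Since $A$ contains $x$ and is preferred at $t = 0$, we have $M_x(0) \geq M_{\overline{x}}(0)$. If $M_x(T) \geq M_{\overline{x}}(T)$, then $A$ remains a preferred bundle at prices $p'$ and alternative~1 holds. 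Otherwise there is a unique $t^{\star} \in [0, T)$ at which $M_x(t^{\star}) = M_{\overline{x}}(t^{\star})$; set $p_0 = q_{t^{\star}}$. For $t > t^{\star}$ every bundle containing $x$ has utility strictly less than $M_{\overline{x}}(t)$, so the preferred bundles at prices $p'$ are precisely the preferred bundles at prices $p_0$ that do not contain $x$. This already gives the first assertion of alternative~2.

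To produce the required bundle of alternative~2, I would select $B^{\star}$ among the preferred bundles at prices $p_0$ that avoid $x$ so as to minimise $|B^{\star} - A|$, and show that this minimum is at most $1$. Such a $B^{\star}$ exists by the construction of $t^{\star}$. Both $A$ and $B^{\star}$ are preferred at prices $p_0$, hence also $|A|$- and $|B^{\star}|$-preferred respectively for the ultra valuation $u_{p_0}$ (Lemma~\ref{the:marginal}). Assume for contradiction that $|B^{\star} - A| \geq 2$. If $|A| \leq |B^{\star}|$, Theorem~\ref{the:2-k} applied with pivot $x \in A - B^{\star}$ yields $y \in B^{\star} - A$ such that $A - x + y$ is preferred at prices $p_0$; this bundle avoids $x$ and satisfies $|(A - x + y) - A| = 1$, contradicting the choice of $B^{\star}$. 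If $|A| > |B^{\star}|$, pick any $x' \in B^{\star} - A$ and apply Theorem~\ref{the:2-k} in the other direction to obtain $y' \in A - B^{\star}$ such that both $B^{\star} - x' + y'$ and $A - y' + x'$ are preferred at prices $p_0$. When $y' = x$, the bundle $A - x + x'$ avoids $x$ and differs from $A$ only by the element $x'$; when $y' \neq x$, the bundle $B^{\star} - x' + y'$ avoids $x$ and satisfies $|(B^{\star} - x' + y') - A| = |B^{\star} - A| - 1$. Either way the minimality of $B^{\star}$ is contradicted.

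It remains to decode $B^{\star}$ into the form required by the statement. If $B^{\star} - A$ is empty, then $B^{\star} \subseteq A$ and, since $x \notin B^{\star}$, in fact $B^{\star} \subseteq A - x$, giving alternative~2(a) with $B = B^{\star}$. If $B^{\star} - A = \{y\}$, then $y \neq x$ (because $x \in A$) and $B := B^{\star} - y$ is contained in $A - x$, so alternative~2(b) holds with this $B$ and $y$. The main obstacle is the case $|A| > |B^{\star}|$ of Theorem~\ref{the:2-k}: the natural pivot on $x$ is unavailable, so the argument must split according to whether the element $y'$ returned by the theorem coincides with $x$, and in each sub-case one has to verify separately that the new preferred bundle avoids $x$ and strictly decreases $|\cdot - A|$ (the $y' = x$ sub-case actually produces a bundle of discrepancy exactly one rather than $|B^{\star} - A| - 1$).
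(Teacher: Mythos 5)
Your proof is correct, and its second half takes a genuinely different route from the paper's. The first half is the same: both arguments raise $p_x$ continuously, observe that the maximum utility over bundles containing $x$ falls at unit rate while the maximum over bundles avoiding $x$ is constant, and stop at the threshold price $p_{0}$ where the two coincide (so that $A$ is still preferred at $p_{0}$ and the preferred bundles at $p'$ are exactly the $p_{0}$-preferred bundles avoiding $x$). Where you diverge is in locating the bundle $B$. The paper picks a competing $p_{0}$-preferred bundle $X$ not containing $x$, compares $|X|$ with $|A|$, and invokes Theorem~\ref{the:all-pref} to produce a preferred bundle of matching size nested inside $X$ (or inside $A$), after which a single application of Theorem~\ref{the:2-k} swaps $x$ out; the size comparison directly decides whether case 2(a) or 2(b) occurs. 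You instead bypass Theorem~\ref{the:all-pref} entirely: you minimize $|B^{\star}-A|$ over all $p_{0}$-preferred bundles avoiding $x$ and show by contradiction, using Theorem~\ref{the:2-k} in both directions (with the correct handling of the $y'=x$ subcase), that the minimum is at most $1$, then decode $B^{\star}$ into form 2(a) or 2(b). Your argument is in the spirit of the ``closest preferred bundle'' device the paper itself uses for Theorem~\ref{the:co-greedy} and Theorem~\ref{the:size}, and it trades one auxiliary theorem (Theorem~\ref{the:all-pref}) for a slightly longer case analysis; both routes rest on Theorem~\ref{the:2-k} and the closure of ultra valuations under subtraction of additive prices. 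I see no gap in either half of your argument.
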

\begin{proof}
Let \mbox{$\mid A \mid \: = \: $} $k_{1}$.
When the price of $x$ increases the utility of all bundles containing $x$ decreases equally, 
whereas the utility of all bundles not containing $x$ stays unchanged.
Bundle $A$ stays preferred until its utility equals that of a bundle $X$
that does not contain $x$. 
Let \mbox{$\mid X \mid \: = \:$} $k_{2}$.
Assume this happens at prices $p_{0}$.
If \mbox{$k_{2} \: \geq \:$} $k_{1}$, since $u_{p_{0}}$ is an ultra valuation,
by Theorem~\ref{the:all-pref} there is a preferred bundle $Y$ such that
\mbox{$Y \subseteq X$} and \mbox{$\mid Y \mid \: = \: $} \mbox{$\mid A \mid$}.
We see that \mbox{$x \in A - Y$} and by Theorem~\ref{the:2-k} there is a preferred bundle
of the form \mbox{$A - x + y$} for some \mbox{$y \in Y - A$}. 
Our claim is satisfied with \mbox{$B \: = \: A - x$}.
If \mbox{$k_{2} \: < \: $} $k_{1}$, by Theorem~\ref{the:all-pref} there is a preferred
bundle of size $k_{2}$, say $Y$ such that \mbox{$Y \subseteq A$}. 
If \mbox{$x \not \in Y$}, we can take \mbox{$B \: = \: Y$}.
If \mbox{$x \in Y$}, \mbox{$x \in Y - X$} and by Theorem~\ref{the:2-k},  
since $u_{p_{0}}$ is an ultra valuation, there is a preferred bundle of the form 
\mbox{$Y - x + y$} for some \mbox{$y \in X - Y$}. We can take \mbox{$B = Y - x$}.
\end{proof}

Note that, if $v$ is substitutes Theorem~\ref{the:p-all} can be strenghtened by modifying
the condition \mbox{$B \subseteq A - x$} to \mbox{$B \: = \: A - x$}.
The difference is due to the fact that ultra valuations may exhibit complementarity and an
increase in the price of item $x$ may induce an agent to let go of other items 
that complement $x$. 
Note that, in Theorem~\ref{the:p-all}, 
the set $A - B$ is a bundle that exhibits complementarity.
A theorem similar to Theorem~\ref{the:p-all} can be formulated for a decrease in the
price of an item $x$.

We shall now present a converse to Theorem~\ref{the:co-greedy}.
\begin{theorem} \label{the:converse}
Let $v$ be a valuation that is {\em not} an ultra valuation.
There is a price vector $p$, a number $k$, \mbox{$0 < k < n$} 
and a $k$-preferred bundle $A$ under prices $p$ such that for no item 
\mbox{$x \in \Omega - A$} is \mbox{$A + x$} a $k + 1$-preferred bundle.
\end{theorem}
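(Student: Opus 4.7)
The plan is to invoke Theorem~\ref{the:equivalence} so that the hypothesis that $v$ is not ultra becomes failure of \textbf{RGP}: one obtains \mbox{$S \subseteq \Omega$} and pairwise distinct \mbox{$x , y_{1} , y_{2} \in \Omega - S$} with
\mbox{$v(S + x) + v(S + y_{1} + y_{2}) \: > \: v(S + y_{1}) + v(S + x + y_{2})$} and
\mbox{$v(S + x) + v(S + y_{1} + y_{2}) \: > \: v(S + y_{2}) + v(S + x + y_{1})$}.
My aim is to exhibit a price vector $p$ under which \mbox{$A \: = \: S + x$} is $k$-preferred for \mbox{$k \: = \: \mid S \mid + 1$}, while in the $(k+1)$-slice the bundle \mbox{$S + y_{1} + y_{2}$} strictly beats both \mbox{$A + y_{1}$} and \mbox{$A + y_{2}$}, so that no bundle of the form \mbox{$A + z$} can be $(k+1)$-preferred.

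I would set \mbox{$p_{z} \: = \: -M$} for \mbox{$z \in S$} and \mbox{$p_{z} \: = \: +M$} for \mbox{$z \notin S \cup \{ x , y_{1} , y_{2} \}$}, so that, for $M$ chosen larger than any $v$-difference in play, every $k$- or $(k+1)$-bundle that is not of the form \mbox{$S + U$} with \mbox{$U \subseteq \{ x , y_{1} , y_{2} \}$} is strictly dominated. For the three remaining prices I set \mbox{$p_{y_{1}} \: = \: 0$}, \mbox{$p_{y_{2}} \: = \: \gamma$} and \mbox{$p_{x} \: = \: \alpha$}, where $\gamma$ and $\alpha$ are real parameters. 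Writing \mbox{$a \: = \: v(S + x)$}, \mbox{$b_{i} \: = \: v(S + y_{i})$}, \mbox{$c_{i} \: = \: v(S + x + y_{i})$} and \mbox{$d \: = \: v(S + y_{1} + y_{2})$}, the behaviour I need reduces to four scalar inequalities: \mbox{$\alpha \leq a - b_{1}$} and \mbox{$\alpha \leq a - b_{2} + \gamma$} (so that $A$ beats $S + y_{1}$ and $S + y_{2}$ at size $k$), together with \mbox{$\alpha > c_{1} - d + \gamma$} and \mbox{$\alpha > c_{2} - d$} (so that $S + y_{1} + y_{2}$ strictly beats $S + x + y_{1}$ and $S + x + y_{2}$ at size $k + 1$).

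Eliminating $\alpha$ between the upper and lower bounds reduces feasibility to
\mbox{$(c_{2} + b_{2}) - (a + d) \: < \: \gamma \: < \: (a + d) - (c_{1} + b_{1})$},
which admits a solution precisely when \mbox{$c_{1} + c_{2} + b_{1} + b_{2} \: < \: 2 (a + d)$}. But this is exactly the sum of the two strict \textbf{RGP}-failure inequalities, so the window for $\gamma$ is nonempty, any such $\gamma$ yields a nonempty window for $\alpha$, and the construction is complete.

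The delicate point is the asymmetric pricing of $y_{1}$ and $y_{2}$: \textbf{RGP} failure only supplies the two cross inequalities \mbox{$a + d \: > \: c_{1} + b_{2}$} and \mbox{$a + d \: > \: c_{2} + b_{1}$}, so a uniform or symmetric price on \mbox{$\{ y_{1} , y_{2} \}$} cannot separate the four cases required to block both \mbox{$A + y_{1}$} and \mbox{$A + y_{2}$} from being $(k+1)$-preferred. Introducing the free parameter $\gamma$ on $y_{2}$ and then summing the two \textbf{RGP}-failure inequalities to force the window for $\gamma$ to be nonempty is the decisive manoeuvre; the remaining verifications (boundedness of $v$-differences allowing $M$ to do its job, handling of bundles missing an $S$-item or containing an item outside \mbox{$S \cup \{ x , y_{1} , y_{2} \}$}) are routine.
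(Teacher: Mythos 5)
Your proof is correct and follows essentially the same route as the paper's: both extract from the failure of an equivalent local property (your \textbf{RGP}, the paper's \textbf{Ultra} --- the same two strict inequalities up to relabelling) a base set $S$ and three distinguished items, price $S$ very low and all other items very high, and then tune the three remaining prices so that the ``singleton-side'' bundle ($S+x$ in your notation, $S+z$ in the paper's) is $k$-preferred while the pair bundle is the unique $(k+1)$-preferred bundle. The only real difference is the tuning step: the paper sets the three prices equal to the marginal values $v(\cdot \mid S)$, which makes the three $k$-slice candidates tie exactly and reduces the $(k+1)$-slice comparison to a direct comparison of the complementarities $c_{S}$, whereas you keep two free parameters and verify a linear feasibility window whose nonemptiness follows from summing the two \textbf{RGP}-failure inequalities --- slightly more laborious, but equally valid.
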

\begin{proof}
The valuation $v$ does not satisfy {\bf Ultra} and therefore there is a bundle $S$ and 
pairwise distinct items $x$, $y$ and $z$ in \mbox{$\Omega - S$} such that
\mbox{$c_{S}(x , y) \: > \:$} \mbox{$ c_{S}(x , z)$} and 
\mbox{$c_{S}(x , y) \: > \:$} \mbox{$ c_{S}(y , z)$}.
We let \mbox{$k \: = \:$} \mbox{$\mid S \mid + 1$}.
We fix the prices of the items in $S$ low enough to ensure that any $k$-preferred
and any $k+1$-preferred bundles include $S$.
We fix the prices of all items of \mbox{$\Omega - S - x - y - z$} to values high enough
to ensure none of those items are contained in any $k$-preferred or $k+1$-preferred 
bundles.
We complete the description of our price schedule
by \mbox{$p_{x} \: = \:$} \mbox{$v(x \mid S)$}, 
\mbox{$p_{y} \: = \:$} \mbox{$v(y \mid S)$} and
\mbox{$p_{z} \: = \:$} \mbox{$v(z \mid S)$}.
We see that \mbox{$u_{p}(S + x) \: = \:$} \mbox{$u_{p}(S + y) \: = \: $}
\mbox{$u_{p}(S + z) \: = \:$} \mbox{$u_{p}(S)$}.
We notice that \mbox{$S + z$} is a $k$-preferred bundle under prices $p$.
Let us examine the $k + 1$ slice. 
The only candidates to being $k + 1$-preferred bundles under prices $p$ are the
three sets \mbox{$S + x + y$}, \mbox{$S + x + z$} and \mbox{$S + y + z$}.
But \mbox{$u_{p}(S + x + y) \: = \: $}
\mbox{$u_{p}(S) + c_{S}(x , y) + v(x \mid S) + v(y \mid S) - p_{x} - p_{y} \: = \:$}
\mbox{$u_{p}(S) + c_{S}(x , y)$}.
Similarly for \mbox{$u_{p}(S + x + z)$} and \mbox{$u_{p}(S + y + z)$}.
We see that \mbox{$u_{p}(S + x + y)$} is strictly larger than \mbox{$u_{p}(S + x + z)$}
and \mbox{$u_{p}(S + y + z)$}.
We conclude that \mbox{$S + x + y$} is the only $k+1$-preferred bundle and that there is no
$k+1$-preferred bundle that extends \mbox{$S + z$}.
\end{proof}

\section{Characterization of preferred bundles} \label{sec:optimal}
We shall characterize preferred and $k$-preferred bundles of an ultra valuation and show that
any bundle that is, in a sense, locallly preferred is globally preferred.
Our first result deals with $k$-preferred bundles.
\begin{theorem} \label{the:size}
Let $v$ be an ultra valuation, \mbox{$A \subseteq \Omega$}, \mbox{$k = \mid A \mid$}
and assume that \mbox{$v(A) \: \geq \: v(A - x + y)$} for any \mbox{$x \in A$} and any
\mbox{$y \in \Omega - A$}.
Then, $A$ is a $k$-preferred bundle.
\end{theorem}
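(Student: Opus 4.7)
The plan is to argue by contradiction: assume there is a bundle $B$ with $|B| = k$ and $v(B) > v(A)$, then use \textbf{Exchange} to produce a bundle that is strictly closer to $A$ and still has valuation at least $v(B)$. Iterating this, we eventually reach $A$ itself, which would force $v(A) > v(A)$.

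More concretely, I would proceed by induction on the size of the symmetric difference $|A \triangle B|$. If $|A \triangle B| = 0$ then $B = A$ and the hypothesis $v(B) > v(A)$ is absurd, so we may assume $A - B$ and $B - A$ are both nonempty (they have equal size because $|A| = |B|$). Pick any $x \in A - B$. Since $|A| \leq |B|$, \textbf{Exchange} supplies some $y \in B - A$ such that
\[
v(A) + v(B) \: \leq \: v(A - x + y) + v(B - y + x).
\]
The theorem's hypothesis gives $v(A - x + y) \leq v(A)$, and subtracting yields $v(B) \leq v(B - y + x)$. Set $B' = B - y + x$; then $|B'| = k$, $v(B') \geq v(B) > v(A)$, and $|A \triangle B'| = |A \triangle B| - 2$.

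Applying the induction hypothesis to $B'$ produces the desired contradiction, which completes the proof. The argument is short and I do not anticipate a genuine obstacle; the one point that requires mild care is verifying that the quantity $|A \triangle B|$ strictly decreases at each step (it decreases by exactly $2$ because we remove a single element of $B - A$ and add a single element of $A - B$), ensuring the induction terminates. Note also that \textbf{Exchange} is applicable throughout because all intermediate bundles have size exactly $k = |A|$, so the size hypothesis $|A| \leq |B'|$ is automatic.
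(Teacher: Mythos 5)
Your proof is correct and follows essentially the same route as the paper: both arguments use \textbf{Exchange} together with the hypothesis $v(A)\ge v(A-x+y)$ to move a better bundle strictly closer to $A$ in Hamming distance, deriving a contradiction by minimality. The paper phrases this as choosing a $k$-preferred bundle $X$ closest to $A$ and showing $A-X$ must be empty, while you phrase it as a descent/induction on $\mid A \triangle B\mid$; these are the same argument.
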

\begin{proof}
Let $X$ be a $k$-preferred bundle that is closest to $A$ 
in terms of Hamming distance: the distance between two bundles is the size of their symmetric
difference.
Suppose there exists some \mbox{$x \in A - X$}.
Then, by {\bf Exchange}, there exists some \mbox{$y \in X - A$} such that
\[
v(A) + v(X) \: \leq \: v(A - x + y) + v(X - y + x).
\]
But \mbox{$X - y + x$} is strictly closer to $A$ than $X$ and therefore
\mbox{$v(X) \: > \: v(X - y + x)$} and \mbox{$v(A) \: < \: v(A - x + y)$}, a contradiction.
We conclude that there is no such $x$ and therefore \mbox{$A = X$} and $A$ 
is a $k$-preferred bundle.
\end{proof}

Let us now characterize preferred bundles.
\begin{theorem} \label{the:optimal}
Let $v$ be an ultra valuation. A bundle \mbox{$A \subseteq \Omega$} is a preferred bundle 
iff all three conditions below are satisfied:
\begin{enumerate}
\item \label{k} $A$ is a $k$-preferred bundle for \mbox{$k = \mid A \mid$},
\item \label{super}
\mbox{$v(A) \: \geq \: v(C)$} for any bundle $C$ such that \mbox{$A \subseteq C$},
\item \label{sub}
\mbox{$v(A) \: \geq \: v(C)$} for any bundle $C$ such that \mbox{$C \subseteq A$}.
\end{enumerate}
\end{theorem}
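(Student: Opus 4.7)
The plan is to dispatch the forward direction in one line (a globally preferred bundle trivially satisfies all three conditions, since it dominates every bundle, in particular every $k$-sized, every superset, and every subset) and spend the work on the converse.

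For the converse, assume $A$ satisfies \ref{k}, \ref{super}, \ref{sub}. Pick any globally preferred bundle $B^{*}$, and let $m = \mid B^{*} \mid$. The point of condition \ref{k} is precisely that it puts $A$ in the hypothesis of Theorem~\ref{the:greedy}: $A$ is a $k$-preferred bundle of an ultra valuation, so Theorem~\ref{the:greedy} produces, for any other size $m$, an $m$-preferred bundle that is comparable with $A$ under inclusion. I will split on whether $m \geq k$ or $m < k$.

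If $m \geq k$, Theorem~\ref{the:greedy} yields an $m$-preferred bundle $C$ with $A \subseteq C$. Since $B^{*}$ is globally preferred and hence a fortiori $m$-preferred, we have $v(C) = v(B^{*})$. Condition \ref{super} then gives $v(A) \geq v(C) = v(B^{*})$. Symmetrically, if $m < k$, Theorem~\ref{the:greedy} produces an $m$-preferred bundle $C$ with $C \subseteq A$, again $v(C) = v(B^{*})$, and condition \ref{sub} gives $v(A) \geq v(C) = v(B^{*})$. In either case $v(A) \geq v(B^{*})$, so $A$ is itself a preferred bundle.

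There is no real obstacle: the whole point of Theorem~\ref{the:greedy} is that from a $k$-preferred bundle of an ultra valuation one can walk monotonically (by inclusion) up or down to preferred bundles of any other size, and the three conditions are exactly tailored so that local optimality within each of these three regimes (the $k$-slice, the supersets of $A$, and the subsets of $A$) combines, via such a walk, into global optimality. The only thing to be careful about is handling the case $m = k$ cleanly (both \ref{k} alone suffices, and it is covered by either branch above since Theorem~\ref{the:greedy} is vacuous for $m = k$ and $v(B^{*})$ already equals the optimum on the $k$-slice).
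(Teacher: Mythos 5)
Your proof is correct, but it follows a genuinely different route from the paper's. The paper argues directly from {\bf Exchange}: it takes a globally preferred bundle $B$ at minimal Hamming distance from $A$, uses {\bf Exchange} to rule out any $x \in A - B$ (when $\mid B \mid > \mid A \mid$) or any $x \in B - A$ (when $\mid B \mid < \mid A \mid$) by producing a strictly closer preferred bundle, concludes that $A$ and $B$ are nested, and then invokes condition~\ref{super} or~\ref{sub}. You instead take the already-established Theorem~\ref{the:greedy} as a black box: starting from the $k$-preferred bundle $A$ (condition~\ref{k}), you walk up or down to an $m$-preferred bundle $C$ nested with $A$, where $m = \mid B^{*}\mid$, note that $v(C) = v(B^{*})$ since a globally preferred bundle of size $m$ is in particular $m$-preferred, and close with condition~\ref{super} or~\ref{sub}; the case $m = k$ is handled by condition~\ref{k} alone. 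Both arguments ultimately rest on the same exchange machinery (Theorem~\ref{the:greedy} is proved by iterating Theorem~\ref{the:co-greedy}, which rests on Theorem~\ref{the:2-k} and {\bf Exchange}), but yours is shorter and more modular given that Theorem~\ref{the:greedy} precedes this result in the paper, whereas the paper's version is self-contained modulo {\bf Exchange} and redoes the closest-bundle argument from scratch. Your handling of the edge case $m = k$ is also careful and correct.
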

Combining the previous two theorems one sees that if a bundle is preferred to all bundles
that can be obtained by discarding a single item and acquiring a single new item, 
to all bundles that can be obtained by acquiring any number of new items 
and to all bundles that can be obtained by discarding any number of items,
then it is a preferred bundle.
\begin{proof}
The {\em only if} part is obvious.
Assume that $A$ satisfies each of the three conditions above and that $B$ is a preferred bundle 
that is one of the closest to $A$ in terms of Hamming distance.
Let \mbox{$k \: = \:$} \mbox{$\mid A \mid$}.
If \mbox{$\mid B \mid \: > \: k$}, by {\bf Exchange}, if there is an \mbox{$x \in A - B$}
there is a \mbox{$y \in B - A$} such that 
\mbox{$v(A) + v(B) \: \leq \:$} \mbox{$v(A - x + y ) + v(B - y + x)$}.
But \mbox{$B - y + x$} is closer to $A$ than $B$ and therefore not a preferred bundle
and \mbox{$v(B - y + x) \: < \:$} \mbox{$v(B)$}.
Therefore \mbox{$v(A) \: < \:$} \mbox{$v(A - x + y)$}, a contradiction to our
assumption~\ref{k}.
We conclude that there is no \mbox{$x \in A - B$}, \mbox{$A \subset B$} and, by our 
assumption~\ref{super}, \mbox{$v(A) \: = \: v(B)$} and $A$ is an optimal bundle.

If \mbox{$\mid B \mid \: < \: k$}, by {\bf Exchange}, if there is an \mbox{$x \in B - A$}
there is a \mbox{$y \in A - B$} such that 
\mbox{$v(B) + v(A) \: \leq \:$} \mbox{$v(B - x + y ) + v(A - y + x)$}.
But \mbox{$B - x + y$} is closer to $A$ than $B$ and therefore not an optimal bundle
and \mbox{$v(B - x + y) \: < \:$} \mbox{$v(B)$}.
Therefore \mbox{$v(A) \: < \:$} \mbox{$v(A - y + x)$}, a contradiction to our 
assumption~\ref{k}.
We conclude that there is no \mbox{$x \in B - A$}, \mbox{$B \subset A$} and, by our 
assumption~\ref{sub}, \mbox{$v(A) \: = \: v(B)$} and $A$ is an optimal bundle.

If \mbox{$\mid B \mid \: = \: k$}, we have \mbox{$v(A) \: = \: v(B)$} 
by our assumption~\ref{k} and $A$ is a preferred bundle.
\end{proof}

If $v$ is substitutes, property~\ref{super} can be weakened to 
\mbox{$v(A) \: \geq \: v(A + x)$} for any \mbox{$x \in \Omega - A$} and
property~\ref{sub} can be weakened to 
\mbox{$v(A) \: \geq \: v(A - x)$} for any \mbox{$x \in A$}.
The reason a stronger condition is needed when dealing with ultra valuations is that 
complementarities may be present.
If items $a$ and $b$ are complementary it may be worthwhile to acquire both $a$ and $b$ 
even though the acquisition of any one of them alone is not attractive: \mbox{$A + a + b$}
may be optimal while both \mbox{$A + a$} and \mbox{$A + b$} are less valuable than $A$.
Similarly, in a situation where \mbox{$a , b \in A$}, one may lose by letting go any one of
them but may profit from relinquishing both.

Note that Theorem~\ref{the:optimal} seems to require an exponential number of checks,
for all subsets and supersets of $A$. 
Section~\ref{sec:greedy} will present a polynomial time
algorithm to find an optimal bundle.

\section{Searching for a preferred bundle} \label{sec:greedy}
We consider the task of finding a preferred bundle, i.e., a bundle that maximizes
an ultra valuation $v$.
In~\cite{DT:AML} the authors show that the greedy algorithm to be described below finds,
in \mbox{$O(n^2)$} steps ($n$ is the size of $\Omega$), 
a preferred bundle iff the valuation satisfies {\bf DT}.
We have shown that this last property is equivalent to {\bf Exchange} and this allows for a
simple proof of the correctness of the greedy algorithm.

The greedy algorithm finds a $k$-preferred bundle, $A_{k}$,  for every 
\mbox{$k = 0 , \ldots , n$}.
The bundle $A_{0}$ is $\emptyset$.
For any \mbox{$k = 0 , \ldots , n - 1$}, 
\[
A_{k + 1} \: = \: A_{k} \cup \{ {\rm argmax}_{x \in \Omega - A_{k}} v(x \mid A_{k}) \}.
\]
Any bundle $A_{l}$ such that \mbox{$v(A_{l}) \: \geq \:$} \mbox{$v(A_{k})$}
for every \mbox{$k = 0 , \ldots , n$} is a preferred bundle by Corollary~\ref{the:kk+1}.
A failure of the greedy algorithm to find a preferred bundle for a valuation $v$ 
implies a failure of Corollary~\ref{the:kk+1} and shows that $v$ is not an ultra valuation.
Therefore ultra valuations are exactly those valuations $v$ for which the greedy algorithm above
finds a preferred bundle for all marginal valuations $v_{S}$.

Note that, letting \mbox{$A_{k} \: = \:$} 
\mbox{$A_{k - 1} \cup \{ x_{k} \}$} and \mbox{$A_{k + 1} \: = \:$} 
\mbox{$A_{k} \cup \{ x_{k + 1} \}$}
if $v$ is substitutes, one has
\[
v(A_{k + 1}) - v(A_{k}) \: = \: v(x_{k + 1} \mid A_{k} ) \: \leq \:  
\]
\[
v(x_{k + 1} \mid A_{k - 1} )
\: \leq \: v( x_{k} \mid A_{ k - 1 } ) \: = \: v(A_{k}) - v(A_{k - 1})
\] 
for every
\mbox{k = 1 , \ldots , n - 1} and one may stop the search as soon as 
\mbox{$v(A_{l + 1}) \: \leq \;$} \mbox{$v(A_{l})$}.
It is this shortened version of our greedy algorithm that is shown to characterize substitutes
valuations in~\cite{RPL:GEB}.
Substitute valuations are exactly those valuations $v$ for which the shortened greedy algorithm 
above finds a preferred bundle for all marginal valuations $v_{S}$.

\section{Competitive equilibrium among ultra agents} \label{sec:Walras}
We want to consider now the allocation problem among agents that exhibit ultra valuations.
Let \mbox{$J = \{ 0 , \ldots , m - 1 \}$} be a set of $m$ agents. 
Agent $j$ has valuation $v_{j}$.
We assume all $v_{j}$'s are ultra valuations and want to study 
the resulting exchange economy.
An allocation is a partition \mbox{$A_{0} , \ldots , A_{m - 1}$} of $\Omega$ in $m$ bundles:
bundle $A_{j}$ is allocated to agent $j$.
Despite the positive results presented in Section~\ref{sec:greedy},
we do not know of a polynomial time algorithm to find the allocation to the agents in $J$ 
that maximizes the social welfare.

We know, from~\cite{GulStacc:99}, that not all economies \mbox{$v_{j} , j \in J$} of
ultra valuations have a competitive equilibrium.
Theorem~\ref{the:Walras} characterizes competitive equilibria among ultra valuations:
instead of having to check that an agent's utility is a maximum over all bundles, 
a local search is sufficient:
it is enough to check this for subsets, supersets of the bundle allocated to an agent and 
bundles lying at distance $2$ from this bundle. 
As explained in Section~\ref{sec:closure} the valuation $u_{j}^{p}$ denotes the utility
of agent $j$ at prices $p$ and is an ultra valuation.

Note that, if for substitutes valuations an ascending (or descending) auction as described
by~\cite{KelsoCraw:82} or~\cite{RPL:GEB} provides a Walrasian equilibrium, this is not
necessarily the case for supra valuations, even if the existence of such an equilibrium is guaranteed.
The reason can be found in Theorem~\ref{the:p-all}: a raise in the price of item $x$ can
lead an agent to let go of some other items in his or her preferred bundle: 
this corresponds to the case the bundle $B$ in the theorem is a strict subset of $A - x$ and, 
in such a case, an item can find itself without any agent interested in it.

\begin{theorem} \label{the:Walras}
A pair \mbox{$( A , p )$} is a Walrasian equilibrium iff for any agent $j$ the three following 
conditions hold:
\begin{enumerate}
\item \label{superw}
\mbox{$u_{j}^{p}(A_{j}) \: \geq \: u_{j}^{p}(C)$} for any bundle $C$ such that 
\mbox{$A_{j} \subseteq C$},
\item \label{subw}
\mbox{$u_{j}^{p}(A_{j}) \: \geq \: u_{j}^{p}(C)$} for any bundle $C$ such that 
\mbox{$C \subseteq A_{j}$},
\item \label{sizew}
\mbox{$u_{j}^{p}(A_{j}) \: \geq \: u_{j}^{p}(A_{j} - x + y)$} for any 
\mbox{$x \in A_{j}$} and any \mbox{$y \in \Omega - A_{j}$}.
\end{enumerate}
\end{theorem}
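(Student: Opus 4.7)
The plan is to reduce the theorem to an application, per agent, of Theorem~\ref{the:optimal} together with Theorem~\ref{the:size}, both of which apply because each utility is an ultra valuation. By the second clause of Lemma~\ref{the:marginal}, since every $v_{j}$ is an ultra valuation and $p$ is additive, each $u_{j}^{p} = v_{j} - p$ is again an ultra valuation. The notion of allocation already encodes market clearing (the $A_{j}$ partition $\Omega$), so $(A,p)$ is a Walrasian equilibrium precisely when, for every agent $j$, the bundle $A_{j}$ maximizes $u_{j}^{p}$ over all subsets of $\Omega$, i.e., $A_{j}$ is a preferred bundle of the ultra valuation $u_{j}^{p}$. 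The theorem therefore reduces to showing that, for each $j$, conditions~\ref{superw}--\ref{sizew} are equivalent to $A_{j}$ being a preferred bundle of $u_{j}^{p}$.

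The ``only if'' direction is immediate: if $A_{j}$ globally maximizes $u_{j}^{p}$, then it certainly dominates every superset (giving~\ref{superw}), every subset (giving~\ref{subw}), and every bundle of the form $A_{j} - x + y$ obtained by a single swap (giving~\ref{sizew}).

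For the ``if'' direction, fix an agent $j$ and verify the three hypotheses of Theorem~\ref{the:optimal} for the ultra valuation $u_{j}^{p}$. Conditions~\ref{superw} and~\ref{subw} of Theorem~\ref{the:Walras} are literally conditions~\ref{super} and~\ref{sub} of Theorem~\ref{the:optimal}. Condition~\ref{sizew} is exactly the hypothesis of Theorem~\ref{the:size}, which, applied to the ultra valuation $u_{j}^{p}$, yields that $A_{j}$ is a $k$-preferred bundle for $k = \mid A_{j} \mid$; this is condition~\ref{k} of Theorem~\ref{the:optimal}. All three conditions of Theorem~\ref{the:optimal} thus hold, so $A_{j}$ is a preferred bundle of $u_{j}^{p}$, completing the argument for this agent, and hence for the market once we quantify over $j \in J$.

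No real obstacle arises: the delicate work (the passage from a local swap condition to $k$-preferredness, and the need to compare with both super- and sub-bundles because of potential complementarities) has already been carried out in Theorems~\ref{the:size} and~\ref{the:optimal}. The only thing worth emphasizing is that condition~\ref{sizew} \emph{by itself} would not suffice -- without~\ref{superw} and~\ref{subw} one cannot rule out strictly better bundles of different size, exactly because ultra valuations may exhibit complementarities, as discussed after Theorem~\ref{the:optimal}.
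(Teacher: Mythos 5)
Your proof is correct and follows essentially the same route as the paper, which likewise derives the \emph{if} direction from Theorem~\ref{the:optimal} and the \emph{only if} direction from the definition of equilibrium. You usefully make explicit one step the paper leaves implicit, namely that condition~\ref{sizew} yields condition~\ref{k} of Theorem~\ref{the:optimal} via Theorem~\ref{the:size} applied to the ultra valuation $u_{j}^{p}$.
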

\begin{proof}
The {\em only if} part follows straightforwardly from the definition of a Walrasian equilibrium,
the {\em if part}  from Theorem~\ref{the:optimal}.
\end{proof}

%% As a consequence, checking that (A , p) is a Walrasian equilibrium is relatively easy:
%% no need to compare u(Ai) to u(B) for all subsets B
%% if the bundles allocated are small the check is done in polynomial time

Theorem~\ref{the:Walras} implies that, in an economy of ultra valuations in which 
no Walrasian equilibrium exists, no transactions based on prices between the agents 
can attain an allocation that satisfies the three conditions above.

If there is a Walrasian equilibrium the Linear Program considered in~\cite{BikhMamer:equ}
has an integral solution and it can be computed in polynomial time as explained 
in~\cite{NisanSegal:JET}.
The hard cases for an allocation algorithm therefore concern the case the
economy has no competitive equilibrium and the Linear Program has a fractional solution.

\section{Conclusion and open questions} \label{sec:conclusion}
This paper has proposed a novel exchange property that can be defined in many equivalent 
but quite different ways, some of them considered earlier in the extant literature.
This exchange property allows for streamlined proofs of properties 
of the valuations that satisfy it, the ultra valuations, in particular Theorem~\ref{the:co-greedy} 
that implies a straightforward, polynomial-time, 
greedy algorithm for finding a preferred bundle for an ultra valuation.
The complexity of the problem of finding an optimal allocation is not known.
Ultra valuations may be compared both with substitutes valuations for which both problems 
of maximizing a single valuation and of finding an optimal allocation are easy, 
and with submodular valuations for which both problems are NP-hard.

Here is a list of intriguing open questions.
\begin{itemize}
\item Is the problem of finding an optimal allocation in NP, in P, NP-hard?
\item What is the communication complexity of the allocation problem?
\item What is the complexity of deciding whether a set of $m$ ultra valuations 
admits a competitive equilibrium?
\item What is the notion of equilibrium, i.e. the solution concept,
that could fit economies of ultra agents?
\item A search for more examples of ultra valuations, in particular real life valuations,
is also worthwhile pursuing.
\end{itemize}

\section{Acknowledgments} \label{sec:ack}
The ground work for the results presented here was laid in 2004-2006 in collaboration with
Alejandro Bertelsen and Meir Bing. Discussions with Noam Nisan are gratefully acknowledged.

\appendix 
\section{Technical lemmas} \label{app:lemmas}
\begin{lemma} \label{le:conditional}
Let $v$ satisfy {\bf Ultra} and \mbox{$B \subseteq \Omega$}, then
the valuation $v_{B}$ satisfies {\bf Ultra}.
\end{lemma}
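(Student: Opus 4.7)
The plan is to observe that this lemma reduces to a straightforward unwinding of definitions. The key identity is that marginal valuations compose: for any $A' \subseteq \Omega - B$, one has $(v_B)_{A'} = v_{B \cup A'}$, which follows directly from the definition $v_A(x \mid B) = v_{A \cup B}(x)$ recalled in Section~\ref{sec:basic}. Consequently, the complementarity operator attached to $v_B$ satisfies
\[
c^{v_B}_{A'}(x,y) \: = \: (v_B)_{A'}(x + y) - (v_B)_{A'}(x) - (v_B)_{A'}(y) \: = \: c^{v}_{B \cup A'}(x,y)
\]
for any $A' \subseteq \Omega - B$ and any distinct $x, y \in (\Omega - B) - A'$.

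Given this, the proof writes itself. First I would pick $A' \subseteq \Omega - B$ and pairwise distinct $x, y, z \in (\Omega - B) - A'$ and assume $c^{v_B}_{A'}(x,y) > c^{v_B}_{A'}(x,z)$. By the identity above, this is the same as $c^{v}_{B \cup A'}(x,y) > c^{v}_{B \cup A'}(x,z)$. Since $x, y, z$ are pairwise distinct elements of $\Omega - (B \cup A')$, the {\bf Ultra} property applied to $v$ at the set $B \cup A'$ yields $c^{v}_{B \cup A'}(y,z) = c^{v}_{B \cup A'}(x,y)$. Translating back through the identity gives $c^{v_B}_{A'}(y,z) = c^{v_B}_{A'}(x,y)$, which is exactly the conclusion required for $v_B$ to satisfy {\bf Ultra}.

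There is no real obstacle here; the only thing to be careful about is the domain bookkeeping, namely that $v_B$ is a valuation on $\Omega - B$, so the quantifiers in the {\bf Ultra} property for $v_B$ range over subsets $A'$ of $\Omega - B$ and elements of $(\Omega - B) - A' = \Omega - (B \cup A')$, which is precisely the domain over which the instance of {\bf Ultra} for $v$ is being invoked.
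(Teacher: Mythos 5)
Your proposal is correct and is essentially the paper's own proof, which rests on exactly the same observation that $(v_{B})_{A} = v_{A \cup B}$ and hence the complementarity operators satisfy $(c_{B})_{A} = c_{A \cup B}$; you have merely spelled out the resulting translation of the {\bf Ultra} property. No issues.
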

\begin{proof}
The proof is straightforward if one notices that
\mbox{${v_{B}}_{A} = v_{A \cup B}$} and therefore 
\mbox{${c_{B}}_{A} = c_{A \cup B}$}.
\end{proof}

\begin{lemma} \label{le:cv}
Let $v$ satisfy {\bf Ultra} and $x$, $y$, \mbox{$z \in \Omega$} be pairwise distinct items.
One of the two following holds:
\begin{enumerate}
\item \label{one} \mbox{$v(z) + v(x + y) \: \leq \: v(y) + v(x + z)$}, or 
\item \label{two} \mbox{$v(z) + v(x + y) \: = \: v(x) + v(y + z)$}.
\end{enumerate}
\end{lemma}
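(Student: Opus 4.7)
My plan is to recognize that the two conditions in the lemma are exactly the two alternatives offered by the {\bf Ultra} property applied at $A = \emptyset$, and then invoke {\bf Ultra} directly.

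First I would compute the identities that translate the lemma's inequalities into statements about the complementarity function $c_{\emptyset}$. Using $v_{\emptyset}(B) = v(B) - v(\emptyset)$, the constant $v(\emptyset)$ cancels in any difference of $c_{\emptyset}$ values, giving
\[
c_{\emptyset}(x,y) - c_{\emptyset}(x,z) \: = \: [v(z) + v(x+y)] - [v(y) + v(x+z)]
\]
and
\[
c_{\emptyset}(x,y) - c_{\emptyset}(y,z) \: = \: [v(z) + v(x+y)] - [v(x) + v(y+z)].
\]
Thus condition~\ref{one} of the lemma is exactly $c_{\emptyset}(x,y) \leq c_{\emptyset}(x,z)$, and condition~\ref{two} is exactly $c_{\emptyset}(x,y) = c_{\emptyset}(y,z)$.

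Next I would apply {\bf Ultra} (with $A = \emptyset$): either $c_{\emptyset}(x,y) \leq c_{\emptyset}(x,z)$, in which case condition~\ref{one} holds directly; or $c_{\emptyset}(x,y) > c_{\emptyset}(x,z)$, in which case {\bf Ultra} yields $c_{\emptyset}(y,z) = c_{\emptyset}(x,y)$, and condition~\ref{two} holds. There is no real obstacle here — the only subtlety is to notice that $v(\emptyset)$ need not equal $0$, but it appears symmetrically in both $c_{\emptyset}$ terms and hence cancels from every comparison above. The lemma is essentially a restatement of {\bf Ultra} at the empty set, specialized to a form convenient for the inductive step in the proof of Lemma~\ref{the:central}.
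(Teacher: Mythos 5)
Your proof is correct and is essentially identical to the paper's own argument: both translate condition~\ref{one} into $c_{\emptyset}(x,y) \leq c_{\emptyset}(x,z)$ and condition~\ref{two} into $c_{\emptyset}(x,y) = c_{\emptyset}(y,z)$, then apply \textbf{Ultra} at $A = \emptyset$. Your version merely spells out the cancellation of $v(\emptyset)$ that the paper leaves implicit.
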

Note that the consequent is a strengthening of the property {\bf RGP} described in
Section~\ref{sec:LLN}.
\begin{proof}
Inequality~\ref{one} is equivalent to \mbox{$c(x , y) \: \leq \: c(x , z)$} 
and equality~\ref{two} is equivalent to \mbox{$c(y , z) = c(x , y)$}.
The result follows from the definition of {\bf Ultra}.
\end{proof}

\begin{lemma} \label{le:1-n}
Let $v$ satisfy {\bf Ultra},
\mbox{$\emptyset \neq A \subseteq \Omega$} and \mbox{$x \in \Omega - A$}.
Then there is some \mbox{$y \in A$} such that
\mbox{$v(x) + v(A) \: \leq \: v(y) + v(A - y + x)$}.
\end{lemma}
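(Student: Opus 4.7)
The plan is to induct on $|A|$. The base case $|A|=1$ is immediate: writing $A=\{y\}$, both sides of the claimed inequality equal $v(x)+v(y)$, so equality holds with $y$ the unique element of $A$.

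For the inductive step with $|A|\geq 2$, the element to peel off should not be chosen arbitrarily. I would pick $a^*\in A$ \emph{extremally} so as to maximize $v(a')+v(A-a')$ over $a'\in A$. Setting $A' := A-a^*$, the inductive hypothesis applied to $A'$ and $x$ produces some $y_0\in A'$ with
\[
v(x)+v(A') \;\leq\; v(y_0)+v(A'-y_0+x). \tag{IH}
\]
I then split on the sign of $\Delta := v(A-y_0+x)+v(A-a^*)-v(A)-v(A-a^*-y_0+x)$. If $\Delta\geq 0$, substituting the resulting upper bound on $v(A-a^*-y_0+x)$ into (IH) immediately collapses to $v(x)+v(A)\leq v(y_0)+v(A-y_0+x)$, so $y=y_0$ works.

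If $\Delta<0$, set $S:=A-a^*-y_0$; after shifting to conditional valuations the inequality rewrites as $c_S(a^*,y_0)>c_S(a^*,x)$. The {\bf Ultra} hypothesis then forces $c_S(y_0,x)=c_S(a^*,y_0)$, and this ternary equality, once the common $v(S)$ terms cancel, telescopes to the key two-term identity
\[
v(A-a^*-y_0+x) \;=\; v(A-a^*+x)+v(A-y_0)-v(A).
\]
Feeding this into (IH) and rearranging gives $v(x)+v(A)\leq v(y_0)+v(A-y_0)-v(A-a^*)+v(A-a^*+x)$. The extremal choice of $a^*$ delivers $v(y_0)+v(A-y_0)\leq v(a^*)+v(A-a^*)$, which absorbs the stray terms and yields $v(x)+v(A)\leq v(a^*)+v(A-a^*+x)$. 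Hence $y=a^*$ works.

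The main obstacle I expect is identifying the correct extremal principle for $a^*$. With an arbitrary choice, Case~2 leaves a dangling term $v(y_0)+v(A-y_0)-v(A-a^*)$ that cannot be controlled and the induction stalls; it is only the maximization of $v(a')+v(A-a')$ over $a'\in A$ — natural in hindsight but not suggested by the statement — that absorbs this term into $v(a^*)$. A secondary, more mechanical difficulty is verifying that the apparently three-term Ultra equality $c_S(y_0,x)=c_S(a^*,y_0)$ does indeed reduce to the clean identity above, since the $v_S(y_0)$ contributions cancel on both sides.
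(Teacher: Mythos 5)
Your proof is correct, and it follows the same basic strategy as the paper's --- induction on $\mid A \mid$, an extremal choice of the element to peel off, the inductive hypothesis on the reduced set, and an application of {\bf Ultra} to the three items $a^{*}, y_{0}, x$ over the base $S = A - a^{*} - y_{0}$ --- but the two arguments diverge in how they close. The paper extremizes a different functional: its $w$ maximizes $v(A - z) - v(A - z + x)$ over $z \in A$ (i.e., minimizes the marginal value of $x$ given $A - z$), and it uses that choice to show that the ``bad'' alternative delivered by {\bf Ultra} (the paper's inequality~(\ref{eq:2-n})) always implies the ``good'' one~(\ref{eq:1-n}); consequently the witness is the $y$ supplied by the induction hypothesis in \emph{both} cases. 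You instead maximize $v(a') + v(A - a')$, and in the hard case ($\Delta < 0$) you exploit the exact equality $c_{S}(y_{0}, x) = c_{S}(a^{*}, y_{0})$ forced by {\bf Ultra} to rewrite $v(A - a^{*} - y_{0} + x)$ and then \emph{switch the witness} to $a^{*}$ itself, with the extremal inequality $v(y_{0}) + v(A - y_{0}) \leq v(a^{*}) + v(A - a^{*})$ absorbing the leftover terms. Your route also bypasses the auxiliary Lemma~\ref{le:cv} (and the explicit appeal to Lemma~\ref{le:conditional}) by applying Definition~\ref{def:ultra} directly with base set $S$, and your easy case $\Delta \geq 0$ needs no hypothesis on $v$ at all. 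The paper's version has the aesthetic advantage of a single uniform witness; yours makes more visible exactly where the {\bf Ultra} equality, rather than a mere inequality, is doing the work. I verified the algebra in both of your cases and the identity $v(A - a^{*} - y_{0} + x) = v(A - a^{*} + x) + v(A - y_{0}) - v(A)$; everything checks out.
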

\begin{proof}
The proof proceeds by induction on \mbox{$k \: = \:$} \mbox{$\mid A \mid$}.
The base case is \mbox{$k = 1$} and is obvious.
Suppose we have proved our claim for $k$ and let \mbox{$\mid A \mid \: = \:$} 
\mbox{$k + 1$}.
Let \mbox{$w \in A$} be the item that maximizes the quantity
\mbox{$v(A - z) - v(A - z + x)$} over \mbox{$z \in A$}.
By the induction hypothesis, there is some \mbox{$y \in A - w$} such that
\[
v(x) + v(A - w) \: \leq \: v(y) + v(A - w - y + x)
\]
By Lemma~\ref{le:conditional}  \mbox{$v_{A - w - y}$} satisfies {\bf Ultra} and
Lemma~\ref{le:cv} shows that either
\begin{equation} \label{eq:1-n}
v(A - w - y + x) + v(A) \: \leq \: v(A - y + x) + v(A - w)
\end{equation}
or
\begin{equation} \label{eq:2-n}
v(A - w - y + x) + v(A) \: \leq \: v(A - w + x ) + v(A - y).
\end{equation}
But we have chosen $w$ in such a way that
\[
v(A - w) - v(A - w + x) \: \geq \: v(A - y ) - v(A - y + x)
\] 
and therefore 
\[
v(A - y + x) + v(A - w) \: \geq \: v(A - w + x) + v(A - y).
\]
We conclude that~(\ref{eq:2-n}) implies~(\ref{eq:1-n}) 
and consequently~(\ref{eq:1-n}) holds.
Therefore:
\[
v(x) + v(A) \: = \: v(x) + v(A - w) + v(A) - v(A-w) \: \leq \: 
\]
\[
v(y) + v(A - w - y + x) + v(A - y + x) - v(A - w - y + x) \: = \:
v(y) + v(A - y + x ).
\]
\end{proof}

\begin{lemma} \label{le:2-2}
Let $v$ satisfies {\bf Ultra} and $w$, $x$, $y$, \mbox{$z \in \Omega$} be
pairwise distinct.
One of the two following inequalities holds:
\begin{enumerate}
\item \label{wx} \mbox{$c(w , y) + c(x , z) \: \leq \: c(w , x) + c(y , z)$}, or 
\item \label{xy} \mbox{$c(w , y) + c(x , z) \: \leq \: c(x , y) + c(w , z)$}.
\end{enumerate}
\end{lemma}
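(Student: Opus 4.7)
I plan to prove the stronger, fully symmetric statement that among the three pairing sums
\[
S_1 \;=\; c(w,x) + c(y,z), \quad S_2 \;=\; c(w,y) + c(x,z), \quad S_3 \;=\; c(w,z) + c(x,y),
\]
the maximum is attained at least twice. This immediately implies the lemma, since were $S_2$ strictly to exceed both $S_1$ and $S_3$ it would be the unique maximum, contradicting the strengthened claim.

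First I would restate {\bf Ultra} in its ultrametric form: for pairwise distinct $a,b,c \in \Omega$ the two largest of $c(a,b), c(a,c), c(b,c)$ coincide, equivalently $c(a,b) \leq \max(c(a,c), c(b,c))$. Since the strengthened claim is invariant under every permutation of $\{w,x,y,z\}$ (the symmetric group on four letters acts transitively on the six pairs, and hence on the three pairings), I may assume without loss of generality that the maximum $M$ of the six $c$-values is attained at $c(w,x)$.

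Applying {\bf Ultra} to $\{w,x,y\}$ and to $\{w,x,z\}$ then forces $\max(c(w,y), c(x,y)) = M$ and $\max(c(w,z), c(x,z)) = M$, producing four sub-cases according to which of the two pairs supplies each extra $M$-value. In the two ``parallel'' sub-cases, where both extra $M$-values share a common vertex with $c(w,x)$ (namely $c(w,y) = c(w,z) = M$, or $c(x,y) = c(x,z) = M$), every sum $S_i$ has an $M$-summand, and one further application of {\bf Ultra} on the triple $\{x,y,z\}$, respectively $\{w,y,z\}$, forces the maximum among the three residual summands to be attained at least twice, which lifts directly to the $S_i$. In the two ``diagonal'' sub-cases, where $c(w,y) = c(x,z) = M$ (so $S_2 = 2M$ already) or $c(x,y) = c(w,z) = M$ (so $S_3 = 2M$), I would invoke {\bf Ultra} on $\{w,y,z\}$ and on $\{x,y,z\}$; a short dichotomy --- either $c(y,z) = M$, whence $S_1 = 2M$, or else both of the other cross-entries are $M$, whence the other diagonal sum equals $2M$ --- supplies the second maximum.

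The main obstacle is not depth but bookkeeping: one must enumerate the four sub-cases cleanly and pick the right auxiliary triple in each. With the symmetry reduction and the ultrametric form of {\bf Ultra} already in hand, each sub-case closes in one or two inequalities.
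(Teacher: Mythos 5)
Your proposal is correct, and it proves something strictly stronger than the lemma by a genuinely different route. The paper argues by contradiction: it negates both inequalities, uses a permutation argument to assume WLOG the single strict inequality \mbox{$c(w,y) > c(w,x)$}, and then chains three applications of {\bf Ultra} (to the triples \mbox{$\{w,x,y\}$}, \mbox{$\{w,x,z\}$}, \mbox{$\{w,y,z\}$}) to reach a configuration in which one of the two inequalities holds after all --- in fact with equality, so the contradiction closes. (As a side remark, the paper's last line cites inequality~\ref{xy} where the configuration it derives, \mbox{$c(y,z) = c(w,y)$} and \mbox{$c(w,x) = c(x,z)$}, actually gives inequality~\ref{wx} with equality; this is a harmless slip.) You instead establish the full strong four-point condition --- that the maximum of the three pairing sums $S_1, S_2, S_3$ is attained at least twice --- by anchoring at the global maximum $M$ of the six complementarity values and splitting into the four sub-cases determined by which endpoints of the two auxiliary triples carry the value $M$. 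I checked the sub-cases: in the two parallel ones the residual summands are exactly the three pairs of a common triple, so the isoceles form of {\bf Ultra} lifts to the $S_i$; in the two diagonal ones the dichotomy on \mbox{$c(y,z)$} does produce a second sum equal to $2M$. Your approach costs a slightly longer case analysis but buys a cleaner, manifestly permutation-invariant WLOG (over the four points rather than over terms of the two inequalities) and a stronger, reusable conclusion: the ultrametric-style four-point condition for $c$, of which the lemma is the special case singling out the pairing \mbox{$\{w,y\},\{x,z\}$}.
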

\begin{proof}
If none of the inequalities holds, some term on the left must be strictly greater than the
corresponding term on the right. 
All such pairs are the same up to a permutation of the variables, therefore, without
loss of generality we may assume \mbox{$c(w , y) > c(w , x)$}.
By {\bf Ultra} we have:
\[
c(x , y) \: = \: c(w , y) \: > \: c(w , x).
\]
If \mbox{$c(x , z) \: \leq \: c(w , z)$} inequality~\ref{xy} is satisfied.
Assume, then, that \mbox{$c(x , z) \: > \: c(w , z)$}.
By {\bf Ultra} we have:
\[
c(x , y) \: = \: c(w , y) \: > \: c(w , x) \: = \: c( x , z ) \: > \: c(w , z).
\]
Notice that \mbox{$c(w , y) \: > \: c(w , z)$} and therefore, by {\bf Ultra},
we have:
\[
c(x , y) \: = \: c(w , y) \: = \: c(y , z) \: > \: c(w , x) \: = \: c( x , z ) \: > \: c(w , z) 
\]
and inequality~\ref{xy} is satisfied.
\end{proof}

\section{Dummy items} \label{app:dummy}
Let $v$ be a valuation on $\Omega$.
When we say that we add a set $X$, \mbox{$X \cap \Omega =$} $\emptyset$,
of dummy items we mean that we consider the
valuation $v'$ on the set \mbox{$\Omega' \: = \:$} \mbox{$\Omega \cup X$} defined by
\mbox{$v'(A) \: = \:$} \mbox{$v( A \cap \Omega )$} for any \mbox{$A \subseteq \Omega'$}.

\begin{lemma} \label{the:d1}
If $v$ is not submodular then $v'$ does not satisfy {\bf LLN}.
Therefore adding dummy items does not preserve ultra valuations.
\end{lemma}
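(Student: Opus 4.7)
The plan is to exhibit an explicit triple at which $v'$ violates \textbf{LLN}, and then invoke Theorem~\ref{the:equivalence} to conclude. By the standard characterization, $v$ failing submodularity means there exist \mbox{$A \subseteq \Omega$} and distinct \mbox{$x , y \in \Omega - A$} with \mbox{$c_{A} ( x , y ) \: > \: 0$}, equivalently \mbox{$v_{A} ( x \mid y ) \: > \: v_{A} ( x )$}. Pick any dummy \mbox{$d \in X$} and work inside $v'$ on \mbox{$\Omega' \: = \: \Omega \cup X$}.

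The key observation is that the dummy character of $d$ collapses every marginal involving $d$ to the corresponding marginal in $v$: for any \mbox{$B \subseteq \Omega'$} we have \mbox{$v'(B + d) \: = \: v'(B)$}. Taking \mbox{$S \: = \: A$} and the pairwise distinct triple \mbox{$x , y , d \in \Omega' - S$}, direct unfolding gives
\[
v'_{A} ( x \mid d ) \: = \: v_{A} ( x ) , \quad v'_{A} ( y \mid d ) \: = \: v_{A} ( y ) ,
\]
\[
v'_{A} ( x \mid y ) \: = \: v_{A} ( x \mid y ) , \quad v'_{A} ( y \mid x ) \: = \: v_{A} ( y \mid x ) .
\]

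Now I instantiate the \textbf{LLN} template from Lemma~\ref{the:LLN-RGP} on $v'$ at $S$, taking the roles ``$z$'' to be $x$, ``$x$'' to be $y$, and ``$y$'' to be $d$. The antecedent \mbox{$v'_{S}(z \mid x) \: > \: v'_{S}(z \mid y)$} becomes \mbox{$v'_{A}( x \mid y ) \: > \: v'_{A}( x \mid d )$}, which is exactly \mbox{$c_{A}(x , y) \: > \: 0$} and holds by the choice of $A , x , y$. The consequent \mbox{$v'_{S}(x \mid y) \: \geq \: v'_{S}(x \mid z)$} becomes \mbox{$v'_{A}(y \mid d) \: \geq \: v'_{A}(y \mid x)$}, which reduces to \mbox{$v_{A}(y) \: \geq \: v_{A}( y \mid x )$}, i.e. \mbox{$c_{A}( x , y ) \: \leq \: 0$}, contradicting the choice of $A , x , y$. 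Hence $v'$ fails \textbf{LLN}.

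For the closing sentence, the symmetric valuations of Section~\ref{sec:symmetric} and the left--right valuations of Section~\ref{sec:LR} furnish ultra valuations that are not submodular, and by the part just proved, adding a dummy to any such $v$ produces a $v'$ that is not \textbf{LLN}, hence, by Theorem~\ref{the:equivalence}, not ultra. The one delicate point in the argument is the matching of indices between the abstract \textbf{LLN} template and the concrete roles of $x , y , d$; every other step is an immediate unfolding of the dummy property.
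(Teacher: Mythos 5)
Your proof is correct and follows essentially the same route as the paper's: both extract $A,x,y$ with $c_{A}(x,y)>0$ from the failure of submodularity and then show that the triple consisting of $x$, $y$ and a dummy item violates \textbf{LLN} for $v'$, with exactly the same assignment of roles in the \textbf{LLN} template (the dummy playing the paper's ``$z$''). The only difference is cosmetic: you phrase the computation via $c_{A}$ and explicitly supply a non-submodular ultra valuation to justify the ``therefore'' clause, which the paper leaves implicit.
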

\begin{proof}
Assume $v$ is not submodular.
There are \mbox{$A \subseteq \Omega$} and \mbox{$x , y \in \Omega - A$}
such that \mbox{$v( x \mid A + y ) \: > \:$} \mbox{$v( x \mid A )$}.
Let \mbox{$z \not \in \Omega $} be a dummy element.
We have 
\[
v'_{A} ( x \mid y ) \: = \: v(x \mid A + y ) \: > \: v( x \mid A ) \: = \: v'(x \mid A + z ) \: = \:
v'_{A} ( x \mid z ).
\]
But 
\mbox{$v'_{A} ( y \mid z ) \: = \:$} \mbox{$ v ( y \mid A ) $},
\mbox{$v'_{A} ( y \mid x ) \: = \:$} \mbox{$ v( y \mid A + x ) $} and
\[
v( y \mid A ) - v(y \mid A + x) \: = \: v ( x \mid A ) - v( x \mid A + y ) \: < \: 0.
\]
\end{proof}

But adding dummy items preserve ultra valuations that are submodular.
\begin{lemma} \label{the:d2}
If $v$ is a submodular ultra valuation then $v'$ is a submodular ultra valuation.
\end{lemma}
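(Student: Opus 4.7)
The plan is to verify submodularity of $v'$ by direct inspection, to compute the complementarity $c'_{A}(x,y)$ of $v'$ in terms of that of $v$, and then to verify the {\bf Ultra} property of Definition~\ref{def:ultra} for $v'$ by a short case analysis in which the submodularity of $v$ is used precisely once to rule out the single bad configuration. By Theorem~\ref{the:equivalence}, this suffices to conclude that $v'$ is an ultra valuation.

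Submodularity of $v'$ is immediate. For any \mbox{$A , B \subseteq \Omega'$}, writing \mbox{$A_{0} = A \cap \Omega$} and \mbox{$B_{0} = B \cap \Omega$}, one has \mbox{$(A \cup B) \cap \Omega \: = \: A_{0} \cup B_{0}$} and \mbox{$(A \cap B) \cap \Omega \: = \: A_{0} \cap B_{0}$}, so
\[
v'(A \cup B) + v'(A \cap B) \: = \: v(A_{0} \cup B_{0}) + v(A_{0} \cap B_{0}) \: \leq \: v(A_{0}) + v(B_{0}) \: = \: v'(A) + v'(B),
\]
using submodularity of $v$.

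Next I would compute $c'_{A}(x,y)$ for any \mbox{$A \subseteq \Omega'$} and distinct \mbox{$x , y \in \Omega' - A$}. Intersecting with $\Omega$ commutes with adjoining an element of $\Omega$, so if both $x$ and $y$ lie in $\Omega$ then \mbox{$c'_{A}(x , y) \: = \:$} \mbox{$c_{A_{0}}(x , y)$}. If at least one of $x$, $y$ is a dummy, the definition of $v'$ yields \mbox{$c'_{A}(x , y) \: = \:$} $0$, because adjoining a dummy never changes the value of $v'$.

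It remains to verify {\bf Ultra} for $v'$. Fix pairwise distinct \mbox{$x , y , z \in \Omega' - A$} with \mbox{$c'_{A}(x , y) \: > \:$} \mbox{$c'_{A}(x , z)$}; the goal is \mbox{$c'_{A}(y , z) \: = \:$} \mbox{$c'_{A}(x , y)$}. If all three items lie in $\Omega$, this follows from {\bf Ultra} for $v$ applied at the bundle $A_{0}$. If $x$ is a dummy, both sides of the strict inequality vanish and the hypothesis fails; the same happens whenever the only items among $x,y,z$ that are in $\Omega$ both appear only on the right hand side. The two residual configurations are (i) $y$ is a dummy and \mbox{$x , z \in \Omega$}, where the hypothesis reads \mbox{$0 \: > \:$} \mbox{$c_{A_{0}}(x , z)$} and the conclusion \mbox{$c'_{A}(y , z) \: = \: 0 \: = \:$} \mbox{$c'_{A}(x , y)$} is automatic because $y$ is a dummy; and (ii) $z$ is a dummy and \mbox{$x , y \in \Omega$}, where the hypothesis becomes \mbox{$c_{A_{0}}(x , y) \: > \:$} $0$. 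Case (ii) is the one place where submodularity of $v$ is indispensable: it forces \mbox{$c_{A_{0}}(x , y) \: \leq \:$} $0$, so case (ii) cannot occur.

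The main obstacle is precisely case (ii) of the last paragraph; submodularity of $v$ caps complementarities at zero, which is exactly what is needed to render the {\bf Ultra} conclusion vacuous in the only dummy-related configuration that would otherwise violate it.
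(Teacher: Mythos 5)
Your proof is correct and follows essentially the same route as the paper's: a case analysis on which of the three items is a dummy, with the hypothesis vacuous when $x$ is a dummy, the conclusion automatic when $y$ is, and submodularity invoked exactly once to exclude the configuration where $z$ is a dummy and $x , y \in \Omega$. The only cosmetic difference is that you verify the \textbf{Ultra} property directly via the complementarities $c_{A}$, whereas the paper verifies the equivalent \textbf{LLN} property via marginal values.
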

\begin{proof}
One easily sees that adding dummy items preserves submodularity.
Suppose $v$ is a submodular ultra valuation, we shall show that $v'$ satisfies {\bf LLN}.
Let \mbox{$A \subseteq \Omega' \: = \:$} \mbox{$\Omega \cup X$},
\mbox{$x , y , z \in \Omega' $}  pairwise distinct and assume that 
\mbox{$v'_{A}( z \mid x ) \: > \:$} \mbox{$v'_{A} ( z \mid y )$}.
We see that $z$ is not a dummy item: \mbox{$z \in \Omega$}.
If $x$ is a dummy item the consequent of {\bf LLN} holds since
\mbox{$v'_{A} ( x \mid y ) \: = \:$} \mbox{$v'_{A} ( x \mid z ) \: = \:$} $0$.
We may therefore assume now that \mbox{$x \in \Omega$}.
If $y$ were a dummy item we would have 
\mbox{$v' ( z \mid A + x ) \: = \:$} \mbox{$v'_{A} ( z \mid x ) \: > \:$} 
\mbox{$v'_{A} ( z \mid y ) \: = \:$} \mbox{$v' ( z \mid A )$},
contradicting the fact that $v'$ is submodular.
We see that we may assume that 
\mbox{$x , y , z \in \Omega$}.
But, if \mbox{$B = A \cap \Omega$}, for any \mbox{$a , b \in \Omega$}
we have \mbox{$v'_{A} ( a \mid b ) \: = \:$}  \mbox{$v_{B} (a \mid b )$}
and our claim follows from the fact that $v$ is an ultra valuation.
\end{proof}

\section{Characterization of substitutes valuations} \label{app:substitutes}
\begin{theorem} \label{the:natural}
Any submodular ultra valuation is $M^{\natural}$-concave.
\end{theorem}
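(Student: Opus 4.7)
The plan is to verify that the two $M^{\natural}$-alternatives cover every pair $(A,B)$ and every $x \in A-B$, given that $v$ is both submodular and ultra. Before anything else, I would reduce to $A \cap B = \emptyset$: replacing $v$ by the marginal $v_{A\cap B}$ preserves submodularity (a standard calculation) and preserves the Ultra property (Lemma~\ref{le:conditional}), and the $M^{\natural}$-inequalities for $v$ on $(A,B)$ translate verbatim into those for $v_{A\cap B}$ on the now-disjoint pair $(A-B,\,B-A)$.

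With $A$ and $B$ disjoint, two regimes are essentially free. If $|A|\le|B|$, the Exchange property of Definition~\ref{def:u-val} is $M^{\natural}$-alternative~(2) on the nose; submodularity is not needed. If $|A|>|B|$ and $B=\emptyset$, $M^{\natural}$-alternative~(1) reduces to $v_{A-x}(x)\le v_{\emptyset}(x)$, which is pure submodularity along the chain $\emptyset\subseteq A-x$.

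The substantive case is $|A|>|B|\ge 1$ with $A$ and $B$ disjoint, and my plan is induction on $|A|$. Since $|A|\ge|B|+1\ge 2$, I pick an auxiliary $x' \in A-\{x\}$ and invoke the inductive hypothesis on $(A-x',\,B)$ with the same distinguished element $x$. The inductive conclusion is either (a) $v(A-x')+v(B)\le v(A-x-x')+v(B+x)$, or (b) some $y\in B$ with $v(A-x')+v(B)\le v(A-x-x'+y)+v(B-y+x)$. Case (a) propagates comfortably: together with the submodular inequality $v_{A-x}(x)\le v_{A-x-x'}(x)$ on the chain $A-x-x'\subseteq A-x$, it yields $M^{\natural}$-alternative~(1) for $(A,B)$.

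Case (b) is where the real work lies, and will be the main obstacle. The natural attempt is to promote the inductive option to option~(2) for $(A,B)$ with the same $y$; writing both inequalities as comparisons of marginals, the gap between them is precisely $c_{A-x-x'}(x,x')-c_{A-x-x'}(x',y)$ (using the identity $v_{S+x'}(t)-v_S(t)=c_S(x',t)$). If $c_{A-x-x'}(x,x')\le c_{A-x-x'}(x',y)$, we win directly. Otherwise, Ultra at base $A-x-x'$ on the triple $\{x,x',y\}$ forces $c_{A-x-x'}(x,y)=c_{A-x-x'}(x,x')$, and this equality together with submodularity (all $c$-values are nonpositive) should convert the inductive bound into $M^{\natural}$-alternative~(1) for $(A,B)$ instead. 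I expect the cleanest execution to require choosing $x'$ (or $y$) as a maximizer of a quantity such as $v(A-x')$, in the spirit of the auxiliary choice made in the proof of Lemma~\ref{the:central}, and I would isolate the computation in case (b) as a separate technical lemma analogous to Lemma~\ref{le:2-2}.
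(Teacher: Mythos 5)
Your reduction to disjoint $A,B$, the case $|A|\le|B|$ (pure {\bf Exchange}), the case $B=\emptyset$ (pure submodularity), and the propagation of case~(a) through $v_{A-x}(x)\le v_{A-x-x'}(x)\le v_B(x)$ are all correct. But the proof does not close: case~(b) is exactly where the theorem lives, and the step you describe as ``should convert the inductive bound into $M^{\natural}$-alternative~(1)'' does not go through as stated. Write $S=A-x-x'$. Starting from the inductive bound $v(A-x')+v(B)\le v(S+y)+v(B-y+x)$ and aiming at alternative~(1) for $(A,B)$, the missing inequality is $c_S(x,x')+v_S(y)\le v_{B-y+x}(y)$; substituting the Ultra equality $c_S(x,x')=c_S(x,y)$ turns the left side into $v_{A-x'}(y)$, so you need $v_{A-x'}(y)\le v_{B-y+x}(y)$. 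The sets $A-x'$ and $B-y+x$ are not nested (they share only $x$, since $A$ and $B$ are disjoint), so submodularity compares neither marginal to the other, and nonpositivity of the $c$-values does not rescue the estimate. You flag this yourself by deferring to an unspecified maximizing choice of $x'$ or $y$ and an unproven ``technical lemma analogous to Lemma~\ref{le:2-2}''; that lemma is the actual content of the theorem in your approach, and without it the argument is a plan, not a proof.

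For comparison, the paper avoids this induction entirely. It pads $B$ with $|A|-|B|$ dummy items to form $B'$ with $|A|\le|B'|$, shows (Lemma~\ref{the:d2}) that adjoining dummies to a \emph{submodular} ultra valuation yields an ultra valuation --- this is the only place submodularity is used --- and then applies {\bf Exchange} to $(A,B')$. If the exchanged $y$ is a genuine item one reads off alternative~(2); if it is a dummy one reads off alternative~(1). If you want to salvage your inductive route, the computation above tells you precisely what auxiliary statement you must prove; otherwise the dummy-item argument is the shorter path.
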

\begin{proof}
Let $v$ be a submodular ultra valuation,
\mbox{$A , B \subseteq \Omega$} and \mbox{$x \in A - B$}.
We must show that either \mbox{$v(A) + v(B) \: \leq \:$} \mbox{$v(A - x) + v(B + x)$},
or there exists some \mbox{$y \in B - A$} such that
\mbox{$v(A) + v(B) \: \leq \:$} \mbox{$v(A - x + y) + v(B - y + x)$}.
If \mbox{$\mid A \mid \: \leq \:$} \mbox{$\mid B \mid$} the claim follows from
the {\bf Exchange} property of $v$.

Suppose, then, that \mbox{$\mid A \mid \: > \:$} \mbox{$\mid B \mid$}.
We shall add dummy items to $B$ in order to bring it to the size of $A$.
Let \mbox{$ k \: = \:$} \mbox{$ \mid A \mid - \mid B \mid \: > \:$} $0$.
Let $X$ be any set of $k$ elements such that 
\mbox{$X \cap \Omega = \emptyset$} and let
\mbox{$\Omega' = \Omega \cup X$} and define $v'$ by
\mbox{$v'(Y) = v(Y \cap \Omega)$} for any \mbox{$Y \subseteq \Omega'$}.
By Lemma~\ref{the:d2}, the valuation $v'$ is an ultra valuation.

Let now \mbox{$B' \: = \:$} \mbox{$B \cup X$}.
We have \mbox{$\mid A \mid \: \leq \:$} \mbox{$\mid B' \mid$} and
\mbox{$x \in A - B'$}.
By the {\bf Exchange} property of $v'$ there is some \mbox{$y \in B' - A$} such that
\mbox{$v'(A) + v'(B') \: \leq \:$} \mbox{$v'(A - x + y) + v'(B' - y + x )$}.
But \mbox{$v'(A) \: = \: v(A)$} and \mbox{$v'(B') \: = \: v(B)$}.
We distinguish two cases: $y$ is a dummy item or an element of $\Omega$.
If \mbox{$y \in \Omega$}, \mbox{$v'(A - x + y) \: = \:$} \mbox{$v(A - x + y)$}
and \mbox{$v'(B' - y + x ) \: = \:$} \mbox{$v(B - y + x)$} and the second option of our
claim is satisfied.
If, on the contrary, $y$ is a dummy item, \mbox{$ y \in X$}, then
\mbox{$v'(A - x + y) \: = \:$} \mbox{$v(A - x)$} and
\mbox{$v'(B' - y + x) \: = \:$} \mbox{$v(B + x)$}
and the first option of our claim is satisfied.
Note that the submodularity of $v$ is used only to prove that $v'$ is an ultra valuation.
\end{proof} 

We are left to show that any $M^{\natural}$-concave valuation is substitutes.
This was shown by Fujishige and Yang in~\cite{FujiYang:03} but, since they use the language 
and techniques of discrete convex optimization an alternative proof is given below.
We shall use the following characterization of substitutes valuations:
\begin{theorem} \label{the:char}
A valuation $v$ is substitutes iff for any vector price $p$, any bundles 
\mbox{$A , B \subseteq \Omega$} both optimal at prices $p$, any \mbox{$x \in A - B$},
there exists a bundle $C$ optimal at prices $p$ such that \mbox{$A - \{ x \} \subseteq C$}
and \mbox{$x \not \in C$}.
\end{theorem}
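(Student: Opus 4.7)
The plan is to prove both implications of the equivalence. Write $D(p)$ for the set of bundles optimal at prices $p$. A standard reduction (raising the prices of items whose prices differ one at a time) shows that $v$ is substitutes iff the following single-item version holds: for every $p$, every item $x$, every $p'$ obtained from $p$ by raising $p_x$ alone to some $p'_x > p_x$, and every $A \in D(p)$, there is some $A' \in D(p')$ with $A - \{x\} \subseteq A'$. So it suffices to prove the equivalence of the stated characterization with this single-item version.

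For the forward direction, assume the single-item substitutes property, and suppose $A, B \in D(p)$ with $x \in A - B$. I form $p'$ by raising $p_x$ by some $\varepsilon > 0$. A direct calculation using that $u_{p'}(Y) = u_p(Y) - \varepsilon$ if $x \in Y$ and $u_{p'}(Y) = u_p(Y)$ otherwise shows $D(p') = \{ Y \in D(p) : x \notin Y \}$, which contains $B$ and is therefore nonempty. Applying the substitutes property to $A$ and the price change $p \to p'$ produces $C \in D(p')$ with $A - \{x\} \subseteq C$; this $C$ lies in $D(p)$ and avoids $x$, giving the required witness.

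For the backward direction, assume the demand condition in the theorem, and let $A \in D(p)$ and $p'$ be as in the single-item statement. I parametrize $p(t) = p + t(p' - p)$ for $t \in [0,1]$. Each $u_{p(t)}(Y)$ is affine in $t$, so $D(p(t))$ is piecewise constant with only finitely many transitions in $[0,1]$. If $A \in D(p(t))$ for every $t$, I set $A' = A$. Otherwise let $t^*$ be the largest $t$ with $A \in D(p(t))$. Then $x \in A$ must hold: otherwise $u_{p(t)}(A)$ is constant in $t$ and, since $\max_Y u_{p(t)}(Y)$ is non-increasing in $t$, $A$ remains optimal throughout, contradicting the definition of $t^*$. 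A standard finiteness-and-continuity argument produces a bundle $B \in D(p(t^*))$ whose utility strictly exceeds $u_{p(t)}(A)$ for $t$ slightly above $t^*$; comparing affine slopes forces $x \notin B$. Now $A, B \in D(p(t^*))$ and $x \in A - B$, so by hypothesis there is $A' \in D(p(t^*))$ with $A - \{x\} \subseteq A'$ and $x \notin A'$. Since $x \notin A'$, $u_{p(t)}(A')$ is independent of $t$, while $\max_Y u_{p(t)}(Y)$ is non-increasing, so $A' \in D(p(t))$ for every $t \geq t^*$. In particular $A' \in D(p')$, finishing the proof.

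The main obstacle is the backward direction, specifically pinning down at the critical time $t^*$ a coexisting bundle $B$ that avoids $x$ so that the hypothesis can be invoked. Once $A'$ is extracted from the hypothesis, its $x$-freeness makes it trivially stable under further increases of $p_x$, so a single application of the hypothesis suffices and no nested recursion is needed.
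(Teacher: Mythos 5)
The paper does not actually prove Theorem~\ref{the:char}: it is stated in Appendix~C as a known characterization of substitutes valuations ("We shall use the following characterization\ldots") and then used as a black box in the proof of Theorem~\ref{the:M-nat-sub}. Your argument is therefore not comparable to a proof in the paper, but it is a correct, self-contained derivation from the Kelso--Crawford price-based definition, and it fills a gap the paper leaves open. Both directions check out: in the forward direction the identity $D(p') = \{Y \in D(p) : x \notin Y\}$ is valid for \emph{any} $\varepsilon > 0$ precisely because $B$ witnesses that the maximum utility does not drop, and in the backward direction the affine-slope argument correctly forces $x \notin B$ at the critical time $t^{*}$ and correctly shows that the $x$-free bundle $A'$ stays demanded for all $t \geq t^{*}$. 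Two steps are asserted rather than proved and deserve a sentence each if this were to be written up: (i) the reduction of the substitutes property to single-item price raises requires the chaining observation that $A_{0} - \{x_{1},\dots,x_{k}\} \subseteq A_{k}$ after $k$ successive raises, and (ii) the existence of the largest $t^{*}$ with $A \in D(p(t))$ and of the bundle $B$ that overtakes $A$ just after $t^{*}$ both rest on the fact that the utilities of the finitely many bundles are affine in $t$, so the set $\{t : A \in D(p(t))\}$ is a closed interval and a single overtaking bundle can be extracted by pigeonhole. Neither point is a gap in substance.
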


\begin{theorem} \label{the:M-nat-sub}
Any $M^{\natural}$-concave valuation is substitutes.
\end{theorem}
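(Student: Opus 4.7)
The plan is to apply the characterization of substitutes given in Theorem~\ref{the:char}. Fix a price vector $p$, let $A, B \subseteq \Omega$ both be optimal at prices $p$, i.e., both maximize the utility $u^{p}(S) = v(S) - \sum_{s \in S} p_{s}$, and fix $x \in A - B$. The goal is to produce a bundle $C$ that is also optimal at prices $p$ and satisfies $A - x \subseteq C$ and $x \notin C$.

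First I would verify that $M^{\natural}$-concavity is preserved under subtraction of an additive valuation. This is a direct computation: for any $A, B \subseteq \Omega$ and $x \in A - B$, the quantities $u^{p}(A-x) + u^{p}(B+x) - u^{p}(A) - u^{p}(B)$ and $u^{p}(A-x+y) + u^{p}(B-y+x) - u^{p}(A) - u^{p}(B)$ coincide with the corresponding quantities for $v$, because the additive contributions of $p$ cancel in each of the two expressions (swap $x$ with $y$, or move $x$ from $A$ to $B$, and the total price is unchanged). Therefore $u^{p}$ is itself $M^{\natural}$-concave.

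Now I apply $M^{\natural}$-concavity to $u^{p}$ at the pair $(A, B)$ with the distinguished element $x \in A - B$. There are two cases. In the first case, $u^{p}(A) + u^{p}(B) \leq u^{p}(A-x) + u^{p}(B+x)$. Since $A$ and $B$ are both optimal, $u^{p}(A) \geq u^{p}(A-x)$ and $u^{p}(B) \geq u^{p}(B+x)$, so both inequalities are equalities. In particular $A-x$ is optimal at $p$, and I take $C := A - x$, which trivially contains $A - x$ and excludes $x$. In the second case, there exists $y \in B - A$ with $u^{p}(A) + u^{p}(B) \leq u^{p}(A-x+y) + u^{p}(B-y+x)$. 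By the same optimality argument, $u^{p}(A-x+y) = u^{p}(A)$ and $u^{p}(B-y+x) = u^{p}(B)$, so $A - x + y$ is optimal. I then take $C := A - x + y$; we have $A - x \subseteq C$, and $x \notin C$ because $y \neq x$ and $x \notin A - x$.

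In both cases the bundle $C$ required by Theorem~\ref{the:char} is produced, so $v$ is substitutes. There is no serious obstacle here once the characterization of Theorem~\ref{the:char} is available: the only thing to check carefully is that the $M^{\natural}$ inequality is preserved after subtracting the linear price function, which is immediate from the fact that both exchange operations in the definition ($x \mapsto$ out, and $x \leftrightarrow y$) are weight-preserving with respect to the additive valuation $p$.
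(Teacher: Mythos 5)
Your proof is correct and follows essentially the same route as the paper: apply the $M^{\natural}$-exchange inequality at $(A,B,x)$, observe that the additive price terms cancel so the inequality passes to the utility $u^{p}$, and use optimality of $A$ and $B$ to force equality, yielding $C = A-x$ or $C = A-x+y$. The only cosmetic difference is that you first record that $u^{p}$ is itself $M^{\natural}$-concave, whereas the paper applies the definition to $v$ and then transfers the inequality to $u^{p}$; the content is identical.
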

\begin{proof}
Let $p$, $A$, $B$ and $x$ be as assumed and let $v$ be $M^{\natural}$-concave.
For any bundle \mbox{$X \subseteq \Omega$}, define
\mbox{$u_{p}(X) \: = \:$} \mbox{$v(X) - \sum_{w \in X} p_{w}$}.
If \mbox{$v(A) + v(B) \: \leq \:$} \mbox{$ v(A - x) + v(B + x)$} we have
\mbox{$u_{p}(A) + u_{p}(B) \: \leq \:$} \mbox{$ u_{p}(A - x) + u_{p}(B + x)$}.
But \mbox{$u_{p}(A) \: \geq \:$} \mbox{$ u_{p}(A - x)$}
and \mbox{$u_{p}(B) \: \geq \:$} \mbox{$ u_{p}(B + x)$}.
We conclude that \mbox{$u_{p}(A) \: = \:$} \mbox{$ u_{p}(A - x)$} and
$A - x$ is the bundle $C$ we looked for.

If \mbox{$v(A) + v(B) \: \leq \:$} \mbox{$ v(A - x + y) + v(B - y + x)$} 
for some \mbox{$y \in B - A$}
we have \mbox{$u_{p}(A) + u_{p}(B) \: \leq \:$} 
\mbox{$u_{p}(A - x + y) + u_{p}(B - y + x)$}.
But we have \mbox{$u_{p}(A) \: \geq \:$} \mbox{$ u_{p}(A - x + y)$} and
\mbox{$u_{p}(B) \: \geq \:$} \mbox{$ u_{p}(B - y + x)$}.
We conclude that \mbox{$u_{p}(A) \: = \:$} \mbox{$ u_{p}(A - x + y)$} and
$A - x + y$ is the bundle $C$ we looked for.
\end{proof}

\section{Choice-language definition of substitutability} \label{app:CL}
In~\cite{Hatfield_Milgrom:2005} Hatfield and Milgrom proposed a definition of substitutability
in terms of the relation between the subsets of choice among a set of possibilities 
and among a  larger set of possibilities.
Their definition, intuitively, states that a valuation $v$ is substitutes iff for any bundles 
\mbox{$X \subseteq Y \subseteq \Omega$} and for any bundle \mbox{$A \subseteq X$}
that maximizes $v$ among all subsets of $X$ and any bundle \mbox{$B \subseteq Y$}
that maximizes $v$ among all bundles of $Y$, every item $x$ of $B$ that happens to be
an item of $X$ is already in $A$: \mbox{$B \cap X \subseteq A$}.

The intuitive description above needs to be carefully formalized if one wants to pay attention 
to situations in which the maximizing bundle is not unique.
The authors above, to avoid, this problem, treat only the case 
where there is a unique maximizing bundle, which is enough since this is the generic case.
%%Note that such a definition, to be meaningful, must be understood to apply to the utility
%%function $u_{p}$ for any price vector \mbox{$p : \Omega \rightarrow \cR$}
%%where \mbox{$u_{p}(A) \, = \,$} \mbox{$v(A) - \sum_{j \in A} p(j)$}.

Theorem~\ref{the:language} formalizes a definition in the style of Hatfield and Milgrom 
in the general case, where ties are permitted, and characterizes the valuations it defines.

\begin{theorem} \label{the:language}
The three following properties of a valuation $v$ are equivalent:
\begin{enumerate}
\item \label{subst} $v$ is substitutes,
\item \label{down} for any additive valuation $p$, \mbox{$X \subseteq Y \subseteq \Omega$} 
and for any \mbox{$B \subseteq Y$} that maximizes $v^{p}$ over all subsets of $Y$, there is an
\mbox{$A \subseteq X$} that maximizes $v^{p}$ over all subsets of $X$ 
such that \mbox{$B \cap X \subseteq A$},
\item \label{up} for any additive valuation $p$, \mbox{$X \subseteq Y \subseteq \Omega$} 
and for any \mbox{$A \subseteq X$} that maximizes $v^{p}$ over all subsets of $X$, 
there is a \mbox{$B \subseteq Y$} that maximizes $v^{p}$ over all subsets of $Y$ 
such that \mbox{$B \cap X \subseteq A$}.
\end{enumerate}
\end{theorem}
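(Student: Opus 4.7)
The plan is to establish the cycle through (1) $\Leftrightarrow$ (2) $\Leftrightarrow$ (3), with property (2) as the hub: the link to (1) is via Theorem~\ref{the:char}, and the link to (3) is via symmetric price perturbations.

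For (2) $\Rightarrow$ (1), I use Theorem~\ref{the:char}. Suppose $A$ and $B$ both maximize $v^p$ over $\Omega$ and $x \in A - B$. Apply (2) with $Y = \Omega$, $X = \Omega - \{x\}$, and $A$ playing the role of the max on $Y$. This yields $A' \subseteq X$ maximizing $v^p$ on subsets of $X$ with $A - \{x\} \subseteq A'$. Since $B \subseteq X$ is already a global maximizer, so is $A'$, and $x \notin A'$; setting $C = A'$ fulfills Theorem~\ref{the:char}.

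For (1) $\Rightarrow$ (2), I use a dynamic price-raising argument. Extend $p$ to all of $\Omega$ by assigning a prohibitive price $M$ to items of $\Omega - Y$, making $B$ globally optimal at the extended prices. Then raise the prices of items of $Y - X$ to $M$ one at a time. At each raise of some $z \in Y - X$, the substitutes-specific strengthening of Theorem~\ref{the:p-all} (noted in the remark following that theorem) forces the new optimum to be either unchanged, equal to the current optimum minus $z$, or the current optimum with $z$ replaced by some $w$; in particular, no item other than $z$ itself is ever dropped. Since $B \cap X$ is disjoint from $Y - X$, each item of $B \cap X$ remains inside the optimum throughout. The final optimum $A \subseteq X$ maximizes $v^p$ on subsets of $X$ and contains $B \cap X$.

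For (2) $\Leftrightarrow$ (3), I use two symmetric perturbations. For (2) $\Rightarrow$ (3), given $A$ maximizing $v^p$ on $X$, define $p^{**}$ by $p^{**}_y = p_y + \epsilon$ for $y \in X - A$ and $p^{**}_y = p_y$ otherwise. For $\epsilon$ smaller than every nonzero gap in $v^p$-values of bundles, the maxima of $v^{p^{**}}$ on $X$ are precisely the maxima of $v^p$ on $X$ contained in $A$, and the maxima of $v^{p^{**}}$ on $Y$ are the maxima of $v^p$ on $Y$ with smallest overlap with $X - A$. Applying (2) at prices $p^{**}$, any max $B^{**}$ of $v^{p^{**}}$ on $Y$ is accompanied by a max $A^{**} \subseteq A$ of $v^{p^{**}}$ on $X$ with $B^{**} \cap X \subseteq A^{**} \subseteq A$; since $B^{**}$ is also a maximizer of $v^p$ on $Y$, setting $B = B^{**}$ gives (3). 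The direction (3) $\Rightarrow$ (2) is dual: perturb by $p^*_y = p_y - \epsilon$ for $y \in B \cap X$, so the maxima of $v^{p^*}$ on $Y$ are precisely the maxima of $v^p$ on $Y$ containing $B \cap X$; applying (3) at $p^*$ and unpacking produces a maximizer $A^*$ of $v^p$ on $X$ with $B \cap X \subseteq A^*$.

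The main technical obstacle is the (3) $\Rightarrow$ (2) direction. Property (3) gives only an upper bound on the overlap $B \cap X$, whereas (2) and Theorem~\ref{the:char} require a lower bound. The perturbation $p^*_y = p_y - \epsilon$ on $y \in B \cap X$ restricts the maxima of $v^{p^*}$ on $Y$ to those maxima of $v^p$ on $Y$ that contain $B \cap X$, thereby converting the upper-bound content of (3) into the lower-bound statement needed. A secondary technicality is the iteration in (1) $\Rightarrow$ (2), which invokes the substitutes-specific strengthening of Theorem~\ref{the:p-all} at each successive price raise.
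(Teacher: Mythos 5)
Your proposal is correct, but it closes the cycle of equivalences along genuinely different edges than the paper does. The paper proves $\ref{subst}\Rightarrow\ref{down}$ and $\ref{subst}\Rightarrow\ref{up}$ by two separate price-deformation arguments (raising prices on $Y-X$ for \ref{down}, lowering them for \ref{up}), and then closes the loop by showing that \ref{down} or \ref{up} forces submodularity and {\bf LLN} via explicit price constructions, after which Theorems~\ref{the:equivalence} and~\ref{the:ultra_sub} yield substitutes; it never relates \ref{down} and \ref{up} to each other directly. You keep only one of the price-deformation arguments ($\ref{subst}\Rightarrow\ref{down}$, fleshed out correctly through the substitutes strengthening of Theorem~\ref{the:p-all}), replace the return trip by a short direct appeal to the ``if'' direction of Theorem~\ref{the:char} (your instantiation $Y=\Omega$, $X=\Omega-\{x\}$ is valid, since $B\subseteq X$ guarantees the restricted maximum equals the global one), and supply the missing direct bridge $\ref{down}\Leftrightarrow\ref{up}$ by $\epsilon$-perturbations of the prices --- penalizing $X-A$ to pin the $X$-maximizers inside $A$, and subsidizing $B\cap X$ to pin the $Y$-maximizers above $B\cap X$. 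Both perturbation arguments check out, and the subsidy trick for $\ref{up}\Rightarrow\ref{down}$ is exactly the right device for converting the upper-bound content of \ref{up} into the lower bound that \ref{down} demands. What each approach buys: the paper's detour through submodularity and {\bf LLN} stays entirely inside its own machinery and showcases the characterization of substitutes as submodular ultra valuations, whereas your route is shorter and more symmetric but rests on the converse direction of Theorem~\ref{the:char}, which the paper states without proof and otherwise uses only in the forward direction; if that characterization is accepted as known, your proof is complete.
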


\begin{proof}
To show that \ref{subst} implies \ref{down} 
one uses the properties of the substitutes valuation $v$ 
under rising prices. 
Begin with the prices given by $p$ for all items of $Y$ and very high prices for items in
\mbox{$\Omega - Y$}. 
Then raise the prices of all items in \mbox{$Y - X$} to a very high price.

To show that \ref{subst} implies \ref{up} one uses the properties of a substitutes valuation under
decreasing prices. 
Begin with the prices given by $p$ for all items of $X$ and very high prices for items in
\mbox{$\Omega - X$}. 
Then decrease the prices of all items in \mbox{$Y - X$} to their price in $p$.

Let us assume \ref{down} or \ref{up}. First, we shall show that $v$ is submodular.
If \mbox{$A \subseteq B \subseteq \Omega$}, \mbox{$x \in \Omega - B$} and
\mbox{$v( x \mid B ) \, > \,$} \mbox{$ v(x \mid A )$}, 
we choose a price $p_{x}$ for $x$ such that
\mbox{$v( x \mid B ) \, > \,$} \mbox{$p_{x} \, > \,$} \mbox{$ v(x \mid A )$}.
For any item in $B$, choose a very low price (may be negative).
One sees that \mbox{$B \cup \{ x \}$} is the unique bundle that maximizes $v^{p}$ 
among all subsets of \mbox{$B \cup \{ x \}$} and
$A$ is the only bundle that maximizes $v^{p}$ among all subsets of \mbox{$A \cup \{ x \}$}, contradicting both \ref{down} and \ref{up}. 
We conclude that \mbox{$v( x \mid B ) \, \leq \,$} \mbox{$ v(x \mid A )$} and $v$ is submodular.
We shall now show that $v$ satisfies {\bf LLN}, but note we shall use submodularity.
Assume that {\bf LLN} does not hold: 
\mbox{$v_{S}(z \mid x) \, > \,$} \mbox{$v_{S}(z \mid y)$} and 
\mbox{$v_{S}(x \mid y) \, < \,$} \mbox{$v_{S}(x \mid z)$}. 
We shall show that \ref{up} does not hold.
Let us give very low prices to all items in $S$
and fix prices for $x$, $y$ and $z$ such that:
\[
v_{S} (x \mid z ) > p_{x} > v_{S}(x \mid y), \ \ 
v_{S} (z \mid x ) > p_{z} > v_{S}(z \mid y), 
\]
\[
v_{S}(y) - v_{S}(x) + p_{x} > p_{y} > v_{S}(y) - v_{S}(\{ x , z \}) + p_{x} + p_{z}.
\]
This is indeed possible since \mbox{$p_{z} <$} \mbox{$ v_{S}(z \mid x) =$}
\mbox{$v_{S}(\{ x , z \}) - v_{S}(x)$}. 
Treating each inequality in turn, and using the fact $v$ is submodular we get:
\[
v_{S}^{p}(z) < v_{S}^{p}(\{ x , z \}) , \ \ v_{S}^{p}(\emptyset) < v_{S}^{p}(x) , \ \ 
v_{S}^{p}(\{ x , y \}) < v_{S}^{p}(y), \ \ 
v_{S}^{p}(x) < v_{S}^{p}(\{ x , z \}), 
\]
\[
v_{S}^{p}(\emptyset) < v_{S}^{p}(z), \ \ v_{S}^{p}(\{ y , z \}) < v_{S}^{p}(y), \ \ 
v_{S}^{p}(x)  < v_{S}^{p}(y), \ \ v_{S}^{p}(y) < v_{S}^{p}(\{ x , z \}).
\]
We see that the bundle \mbox{$S \cup \{y \}$} is the only bundle that maximizes $v^{p}$ 
among all subsets of \mbox{$S \cup \{x , y \}$} and that 
\mbox{$S \cup \{x , z \})$} is the only bundle that maximizes $v^{p}$ among all subsets of 
\mbox{$S \cup \{x , y , z \}$}, a contradiction to both \ref{up} and \ref{down}.
We conclude that $v$ satisfies {\bf LLN}.
By Theorem~\ref{the:equivalence} it is an ultra valuation.
Since we have shown $v$ is submodular we conclude that  it is substitutes
by Theorem~\ref{the:ultra_sub}.
\end{proof}

\bibliographystyle{plain}
\bibliography{../../../my}

\end{document}